\def\BibTeX{{\rm B\kern-.05em{\sc i\kern-.025em b}\kern-.08emT\kern-.1667em\lower.7ex\hbox{E}\kern-.125emX}}
\newtheorem{feature}{Feature}[section]
\newcommand{\rmnum}[1]{\romannumeral #1}
\newcommand{\Rmnum}[1]{\expandafter\@slowromancap\romannumeral #1@}
\newcolumntype{C}[1]{>{\centering\arraybackslash}m{#1}}
\begin{document}

%
\title[A Framework for Allocating Server Time to Spot and On-demand Services]{A Framework for Allocating Server Time to Spot and On-demand Services in Cloud Computing}

%
\author{Xiaohu Wu}
\affiliation{%
  \institution{Nanyang Technological University}
  \country{Singapore}
}
\email{xiaohu.wu@ntu.edu.sg}

\author{Francesco De Pellegrini}
\affiliation{%
  \institution{University of Avignon}
  \city{Avignon}
  \country{France}
  }
\email{francesco.de-pellegrini@univ-avignon.fr}

\author{Guanyu Gao}
\affiliation{%
  \institution{Nanyang Technological University}
  \country{Singapore}
}
\email{ggao001@ntu.edu.sg}

\author{Giuliano Casale}
\affiliation{%
 \institution{Imperial College London}
 \city{London}
 \country{United Kingdom}}
\email{g.casale@imperial.ac.uk}

%
\renewcommand{\shortauthors}{Xiaohu Wu, Francesco De Pellegrini, Guanyu Gao, and Giuliano Casale}

%
\begin{abstract}
Cloud computing delivers value to users by facilitating their access to servers in periods when their need arises. An approach is to provide both on-demand and spot services on shared servers. The former allows users to access servers on demand at a fixed price and users occupy different periods of servers. The latter allows users to bid for the remaining unoccupied periods via dynamic pricing; however, without appropriate design, such periods may be arbitrarily small since on-demand users arrive randomly. This is also the current service model adopted by Amazon Elastic Cloud Compute. In this paper, we provide the first integral framework for sharing the time of servers between on-demand and spot services while optimally pricing spot service. It guarantees that on-demand users can get served quickly while spot users can stably utilize servers for a properly long period once accepted, which is a key feature to make both on-demand and spot services accessible. Simulation results show that, by complementing the on-demand market with a spot market, a cloud provider can improve revenue by up to 461.5\%. The framework is designed under assumptions which are met in real environments. It is a new tool that other cloud operators can use to quantify the advantage of a hybrid spot and on-demand service, eventually making the case for operating such service model in their own infrastructures.
\end{abstract}

%
%

\begin{CCSXML}
<concept>
<concept_id>10003033.10003099.10003100</concept_id>
<concept_desc>Networks~Cloud computing</concept_desc>
<concept_significance>500</concept_significance>
</concept>
<ccs2012>
<concept>
<concept_id>10003033.10003068.10003078</concept_id>
<concept_desc>Networks~Network economics</concept_desc>
<concept_significance>500</concept_significance>
</concept>
<concept>
<concept_id>10010147.10010341.10010342</concept_id>
<concept_desc>Computing methodologies~Model development and analysis</concept_desc>
<concept_significance>500</concept_significance>
</concept>
<concept>
<concept_id>10002950.10003648.10003688.10003689</concept_id>
<concept_desc>Mathematics of computing~Queueing theory</concept_desc>
<concept_significance>100</concept_significance>
</concept>
</ccs2012>
\end{CCSXML}

\ccsdesc[500]{Networks~Cloud computing}
\ccsdesc[500]{Networks~network economics}
\ccsdesc[500]{Computing methodologies~Model development and analysis}
\ccsdesc[100]{Mathematics of computing~Queueing theory}

%
\keywords{cloud computing, spot and on-demand services, time allocation, pricing}

%

%
\maketitle

\section{Introduction}
\label{sec.pricing-goals}

The global cloud Infrastructure-as-a-Service (IaaS) market grew to \$34.6 billion in 2017, and is projected to increase to \$71.6 billion in 2020 \cite{gartner-rp-1}.
IaaS delivers value to users by facilitating their access to servers or virtual machines: users can rent servers from cloud service providers (CSPs) whenever their need arises.
From a CSP's perspective, an important question is what forms of services should be offered to attract more users and achieve high resource efficiency.
One example is Amazon Elastic Cloud Compute (EC2), the leading CSP, accounting for 51.8\% of the global market share in 2017 \cite{gartner-rp-1}. There are two types of service: on-demand and spot instances (i.e., virtual machines) \cite{amazon-pricing,Ben-Yehuda13a}.
The former are offered at a fixed price and users pay only for the period in which instances are consumed. They may be idle at times, and such states are sold in the form of spot instances to improve resource efficiency. Users can bid prices for spot instances that will be offered and run as long as their bid is above the spot price; spot users are billed by the spot price that is usually lower than the price of on-demand instances (i.e., on-demand price). So far, numerous works have been done for users to utilize these instances cost-efficiently \cite{Kumar,Li16a}.


Offering users on-demand and spot services is an interesting option for CSPs. To understand the internal process, two intertwined aspects need to be controlled at once: (\rmnum{1}) the sharing of server time among on-demand and spot users and (\rmnum{2}) the pricing of spot instances.
A framework covering these aspects is needed in order to offer such services. Existing works \cite{Abhishek12,Dierks16a} focus on the case that the spot and on-demand markets are isolated from each other \cite{Devanur17}.
The pricing of spot instances is studied in \cite{Wang13a} where a demand curve is used to describe the relation between the spot price and the number of accepted bids \cite{Wang13a}. However, the curve is unknown in reality \cite{Besbes09a}. In \cite{Abhishek12,Dierks16a,Wang13a}, the connection of the two markets is ignored; however, the spot market is supposed to utilize the idle servers of on-demand market. Such idleness affects the capacity of spot market to accept bids and the specific schemes for assigning bids to servers.


While sharing servers among users, the model needs a feature: on-demand users can get served shortly upon arrivals, while a spot user can stably utilize a server for a properly long time once its bid is accepted. It is challenging to obtain this feature. In particular, each on-demand user requests to occupy servers in a specific period; spot users have lower priority to access servers and bid prices to utilize the unoccupied periods. After a spot user's bid is accepted and assigned to a server, the server may be preempted at any time after the assignment, since on-demand users arrive randomly. What's worse, for each accepted bid, there may be a process of loading virtual machine image to make the server ready for use; assume it takes $k^{\prime}$ minutes, e.g., $k^{\prime}$ approximates 3 \cite{Mao12a,Razavi13a}. So, the time that the server dedicates to an accepted bid needs to be larger than $k^{\prime}$ minutes so that spot users can get some effective utilization time of servers.
We will use a discrete-time model to coordinate the assignment of all users' requests to servers; thus, time is divided into slots and each contains $k$ minutes. These requests may arrive at any time in a slot; the assignment is performed at the beginning of the next one and it is the only action that changes the server state (i.e., occupied or idle). So, once a bid is assigned at slot $t$, it can stably utilize a server along the slot, without being interfered by the high priority of on-demand requests; here, the effective utilization time is $k-k^{\prime}$ minutes and spot users may hope that $k-k^{\prime}$ is properly large, e.g., more than ten minutes. Contradictorily, the slot duration bounds the waiting time of an on-demand request, since it needs to wait for up to $k$ minutes to get assigned and served; so, the length $k$ of a slot cannot be large.

Based on the observation above, we further propose a parallelized service model able to exploit the trade-off between preemption of spot instances for on-demand requests and service persistence for spot requests.
In particular, all servers are divided into $b$ groups; for the $i$-th group where $i\in [1, b]$, the assignment of requests occurs at the beginning of slot $t=h\cdot b+i$ where $h=0, 1, 2, \cdots$; these requests arrive in the last slot $t-1$. Consequently, the server state of each group can keep constant for $b$ slots; at any slot, the arriving jobs will be connected and assigned to a specific group. So, in order to guarantee that the effective server utilization time of an accepted bid is positive, the requirement is $b\cdot k>k^{\prime}$. A parallelized model allows us to set $k$ to a value small enough and set $b$ to a value properly large; here $k$ can be smaller than $k^{\prime}$. As a result, an on-demand request will get served shortly (within $k$ minutes). After a bid is accepted, the effective server utilization time could also be large, i.e., $b\cdot k-k^{\prime}$ minutes.


\vspace{0.15em}\noindent\textbf{Main Results.} Servers are shared among on-demand and spot users that arrive randomly; the former have higher priority to access servers at a fixed price. Spot users bid prices for the periods unoccupied by on-demand users.
A key feature that makes such services accessible is that, on-demand users can get served within a short time upon arrivals, while a spot user can stably utilize a server for a properly long time once its bid is accepted. In this paper, we propose a discrete-time framework that has such a feature for allocating the time of servers between on-demand and spot services. The framework presents an integral process inside the system to serve the arriving on-demand and spot users: (\rmnum{1}) assign the requests of on-demand users to servers, (\rmnum{2}) determine the optimal spot price, (\rmnum{3}) decide which spot users' bids are accepted, with the idle states of servers after the assignment of on-demand requests, and (\rmnum{4}) assign the requests of spot users to servers; these actions occur at the beginning of every time slot.
The framework itself does not rely on any impractical assumption and can guide other CSPs to operate on-demand and spot services in their own infrastructures. Simulations show that the CSP's revenue could be improved by up to 461.5\% while the resource utilization could be improved by up to 725.0\%, compared with the case of only providing on-demand service.



The rest of this paper is organized as follows. In Section~\ref{sec.preliminary-related-works}, we further explain the on-demand and spot services and introduce the related work. We propose the model and schemes for managing the requests of spot and on-demand users in Section~\ref{sec.basic-resource-management}; the optimal pricing of spot instances is given in Section~\ref{sec.optimal-spot-pricing}. To improve quality of services, we further propose an extended framework in Section~\ref{sec.extended-model}. Simulations are done to show the efficiency of spot pricing in Section~\ref{sec.performance-evaluation}. Finally, we conclude the paper in Section~\ref{sec.conclusion}.

\section{Preliminaries, and Related Work}
\label{sec.preliminary-related-works}

Before we formally introduce the proposed framework, we provide an overview of the state of the art on spot and on-demand services
and introduce the main related works.

\subsection{On-demand and Spot Services}
\label{sec.preliminary}

Amazon EC2 is the current reference cloud service provider based on spot and on-demand instances. Instances are virtual machines\footnote{In this paper, we use the terms "instances", "servers", and "virtual machines" interchangeably.} and can have different configurations of CPU, cache and disks. Instances of the same configuration have the same price and form a single market in Amazon EC2 \cite{amazon-pricing}. In this paper we also consider such a homogeneous case. On-demand instances are always available at a fixed unit price and each instance's price is charged on an hourly basis. Even if partial hour of on-demand instances is consumed, the tenant will be charged the fee of the entire hour. Spot instances are with uncertain availability and their price (termed as "spot price") fluctuates over time \cite{Ben-Yehuda13a}. Every user can bid a price for spot instances and they will be granted to users only if the bid price is not below the spot price. The bid price is the maximum price that the spot user can accept to pay for the spot instances. Once the spot price exceeds the user's bid price, its spot instances will get lost and terminated by Amazon EC2. Users will be charged according to the spot prices.

From a CSP's perspective, spot instances render available the computing capacity unused in the on-demand market and allow for some discount compared with the on-demand price. This permits increasing the CSP's gain in terms of revenue and user satisfiability. First, users with different delay requirements can be satisfied economically by the two types of instances  \cite{Jain14a,Wu17a}. Latency-critical users can get service quickly by specifying a period in which to utilize instances. Delay-tolerant users can first utilize spot instances at lower prices but with uncertain availability; in case that users do not get enough instances in a long period, they can turn to on-demand instances to accelerate processing their jobs. Next, the periods of servers unoccupied by on-demand users correspond to the idle states of on-demand instances. In the Amazon EC2 service, spot users can bid prices to utilize these states, i.e., spot instances. The pricing mechanism is not fully disclosed; however, the spot price is claimed to be set through a uniform price, sealed-bid, market-driven auction \cite{Ben-Yehuda13a}: "uniform price" means all bidders pay the same price (i.e., spot price); "sealed-bid" means a user does not know the bids of the other users; "market-driven" means the spot price fluctuates based on the supply and demand of available unused EC2 capacity but is updated regularly.

\subsection{Related Work}
\label{sec.related-works}

Many works have focused on characterizing spot prices over time and understanding the spot pricing scheme \cite{Kumar,Li16a};
the most relevant are \cite{Ben-Yehuda13a,Wang13a,Zheng15}. 
Agmon Ben-Yehuda {\em et al.} analyze the time series of spot prices in different regions and define the availability of spot instances as a function of the bid price, i.e., the probability that a user successfully gets spot instances under an arbitrary bid price \cite{Ben-Yehuda13a}. The authors showed that the functional curves of different types of instances in 4 regions share the same shape. Further, they conclude that spot prices in Amazon EC2 are usually drawn from a tight, fixed range of prices and are not driven by the relation of supply and demand as claimed by Amazon EC2. The authors claim that the use of such a pricing scheme can create an impression of false activity (demand and supply changes) and mask times of low demand and price inactivity, thus possibly driving up the CSP's stock.

Wang {\em et al.} study the optimal pricing of spot instances \cite{Wang13a}. They assume perfect knowledge of the demand curve describing the relation of demand and price at every slot $t$, i.e., the number $N_{t}$ of bids accepted and served under every possible spot price. Further, Lyapunov optimization is applied to derive the optimal price of spot instances by assuming that the total number $L_{t}$ of bids is kept finite at every slot $t$; here the $L_{t}$ bids contain the bids that both newly arrive at $t$ and all bids that arrived at the previous slots but have not been served so far. However, the demand curve is unknown in reality \cite{Besbes09a}; an additional complication is that the current cloud market is still rapidly growing and unstable \cite{gartner-rp-1}.

Zheng {\em et al.} derive the cost-optimal bid price for users to utilize spot instances, based on an estimated distribution of the past spot prices in Amazon EC2 \cite{Zheng15}. In particular, similar to \cite{Wang13a}, Lyapunov optimization is used to derive the relation between the number of bid arrivals $\Lambda_{t}$ and the spot price $\pi_{t}$ at every slot $t$; as a result, by assuming the bid arrivals follow a simple distribution (e.g., exponential), the more complex distribution of spot prices could be approximated analytically. Other assumptions adopted include (\rmnum{1}) the users' bids at every slot $t$ follow a uniform distribution over $[\underline{\pi},\, \overline{\pi}]$, and (\rmnum{2}) at the end of each slot, the proportion of the accepted bids that are finished is a constant. The first assumption enables simply deriving the expected number of the bids accepted at $t$, i.e., the fraction of bids whose prices are not below the spot price.

In \cite{Wang13a,Zheng15}, due to their assumptions, the number of bids accepted at every slot does not need to rely on the idle state of the on-demand market. However, the spot market's capacity to accept bids is supposed to be determined by such idleness. Their models cannot account for the actual preemption scheme used for sharing server time among on-demand and spot users. 
Furthermore, in order to apply the Lyapunov optimization technique, an underlying assumption in \cite{Zheng15,Wang13a} is that all bids submitted by users will be finally accepted and served. However, this is not what Amazon EC2 promises to its spot users, and it only provides best-effort services.

Abhishek {\em et al.} and Dierks {\em et al.} analyze the performance of a hybrid spot and on-demand market using queuing theory and game theory in continuous time \cite{Abhishek12,Dierks16a}. More recently, in \cite{Dierks16a}, the servers are separated into two parts, respectively serving on-demand and spot jobs. Jobs of users have diverse values and sensitivities to delay. On-demand market has a higher price but guarantees a negligible delay; in spot market, the lower a user's bid, the larger the delay of completing its jobs. Users aim to maximize their surplus and make a choice on which type of instances to use. The authors show that offering a spot market can increase the profit of a CSP. In spite of the technical merits of \cite{Abhishek12,Dierks16a}, an issue in these works is that in the proposed schemes the idle servers in the on-demand market cannot be sold as spot instances, whereas making use of such idle instances is one of the main attractions of the spot market \cite{Devanur17}.

Furthermore, inspired by the dynamic pricing of Amazon EC2, there are also many works that apply the auction and mechanism design theory to cloud pricing \cite{Zhang17a,Shi17a,Shi16a,Fu14a,Shi14a,Zhang14a,Shi14b,Wang13b,Zhang18b,Zhou16a,Zhou18a,Zhang16a}; the most relevant ones are \cite{Jain15a,Wu15a,Azar15a}. In those frameworks, jobs truthfully report their values and latency requirements to the CSP; the CSP chooses a subset of jobs to maximize the social welfare, and process the chosen jobs on multiple machines with its capacity constraint. Unlike the model of this paper, Wu and De Pellegrini \cite{Wu17b} analyze the performance of a type of QoS-differentiated pricing in cloud computing via an analytical approach. The CSP offers multiple QoS classes: the jobs of each class will be completed with a finite waiting time. Also, the smaller the waiting time, the higher the unit price.
The CSP's servers are divided into several groups, each processing the jobs of the same class. The authors give the optimal price, the optimal rate of accepting jobs, and the minimal number of servers needed for each QoS class, thus deriving the performance of the whole system.


\section{Managing On-demand and Spot Services}
\label{sec.basic-resource-management}

In this section, we propose a baseline model for colocating on-demand and spot jobs on servers. The key notation is summarized in Table~\ref{table}. The model refers to several schemes to (\rmnum{1}) assign on-demand jobs to servers where every server will dedicate a specified period to the assigned job, (\rmnum{2}) decide which bids to accept under an arbitrary spot price, and (\rmnum{3}) assign spot jobs to servers. As discussed in Section~\ref{sec.preliminary}, we assume that there are $m$ homogeneous servers.


\begin{table*}
	\centering
		\caption{Key Notation}
		\begin{tabular}{| C{1.3cm} | C{11.5cm} |}   
		
        	\hline

			Symbols & Meaning\\ \hline

             $m$    &  the total number of servers/instances \\ \hline

			$M_{t}$   &   the number of servers idle at slot $t$ for spot market \\ \hline


			$\mathcal{A}_{t}$   &   the set of all bids (or spot jobs) at $t$ where $\mathcal{A}_{t}=\mathcal{J}_{t}^{\prime}\cup\mathcal{J}_{t}^{\prime\prime}$ \\ \hline

			$\pi_{t}$   &   the spot price at $t$ \\ \hline


			$N_{t}$   &   the number of bids accepted at $t$ \\ \hline


            $\mathcal{J}_{t}^{\prime}$   &   the bids of such users who bid successfully at $t-1$ and continues bidding at $t$\\ \hline

            $\mathcal{J}_{t}^{\prime\prime}$  &  the bids that are newly submitted in the period of $[(t-1)\cdot k, t\cdot k)$ \\ \hline

            $\hat{\mathcal{J}}_{t}^{\prime}$  &  the bids in $\mathcal{J}_{t}^{\prime}$ that are also accepted at $t$\\ \hline

            $\hat{\mathcal{J}}_{t}^{\prime\prime}$  &  the bids in $\mathcal{J}_{t}^{\prime\prime}$ accepted at $t$\\ \hline

            $k$  &  a slot contains $k$ minutes \\ \hline

            $k^{\prime}$  &  the time spent on loading/migrating virtual machine images (VMIs) is $k^{\prime}$ minutes \\ \hline      	

            $\beta$  &  $k^{\prime}/k$ \\ \hline

            $\hat{\mathcal{J}}_{t,1}^{\prime}$    &  all bids of $\hat{\mathcal{J}}_{t}^{\prime}$ that cause the operation of migrating VMI at the beginning of $t$\\ \hline

            $\hat{\mathcal{J}}_{t,2}^{\prime}$    &  $\hat{\mathcal{J}}_{t}^{\prime}-\hat{\mathcal{J}}_{t,1}^{\prime}$, in which each bid is assigned to the same server at $t-1$ and $t$\\ \hline

             $f_{t}$    &  $|\hat{\mathcal{J}}_{t}^{\prime\prime}\cup \hat{\mathcal{J}}_{t,1}^{\prime}|$, i.e., the total number of the operations of loading/migrating VMI\\ \hline
            		
		\end{tabular}
		\label{table}
\end{table*}

\subsection{Model for Colocating On-demand and Spot Jobs}
\label{sec.model_colocating}

On-demand jobs have higher priority than spot jobs to access servers. On demand jobs arrive randomly and spot jobs may be preempted arbitrarily. However, the resource sharing model needs to accommodate two seemingly conflicting requirements:
\begin{description}
\item [(\rmnum{1}) Immediacy of on-demand service.] Upon arrival of an on-demand job, it can get served in a short time;
\item [(\rmnum{2}) Persistence of spot service.] After a spot job enters service, it will not be preempted shortly by on-demand jobs; ideally, the colocation scheme needs to guarantee a minimum execution time for spot jobs.
\end{description}
We first need to provide a discrete-time model able to describe the assignment of jobs to servers and express the above tradeoff. To this aim, time is divided into slots and the assignment occurs at the beginning of a slot. The slot duration is $k$ minutes, and the $t$-th slot corresponds to the period of $[t\cdot k, (t+1)\cdot k)$ where $t=0, 1, 2, \cdots$.

\begin{figure}[t]
\begin{center}
  \includegraphics[width=2.65in]{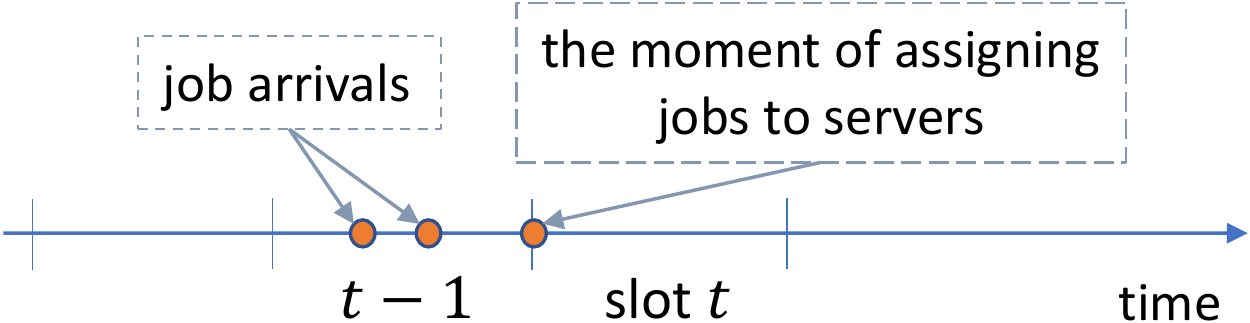}
  \caption{Discrete-time service model where $t=1, 2, \cdots$.}\label{Fig.discrete-time-model}
\end{center}
\end{figure}

From the time point 0, jobs begin to arrive; the initial slot is slot 0 during which all servers are idle. All jobs that arrive in the period of slot $t-1$ will be available at the beginning of slot $t$ where $t=1, 2, \cdots$. Among these jobs, the accepted jobs will get served and dispatched to servers at the beginning of slot $t$. For convenience, we will simply say from a system administrator's perspective that the arrival time of these jobs is $t$. The discrete-time model is also illustrated in Fig.~\ref{Fig.discrete-time-model}. In such a model, the action that changes the states of servers is the assignment of jobs to servers and it only occurs at the beginning of every slot, and their states remain constant in the period of each slot. As a result, if a spot job is executed on some server, it will not be preempted by the high priority of on-demand jobs during a slot and the minimum time that the server dedicates to it is $k$ minutes. Formally, the discrete-time model provides the on-demand and spot services with the following properties that respectively quantify the immediacy of on-demand service and the persistence of spot service.

\begin{feature}\label{feature-11}
On-demand users can be latency-critical. Upon arrival, they need to wait for up to $k$ minutes to get served.
\end{feature}

\begin{feature}\label{feature-12}
Spot users are usually delay-tolerant. Once bid successfully at a slot and assigned to some servers, it is guaranteed that they can persistently get served for $k$ minutes.
\end{feature}

Only after providing the model for sharing servers, we shall be able to figure out the framework for assigning jobs and accepting bids. The conceptual framework is illustrated in Fig.~\ref{Fig.framework-1} and will be elaborated in the rest of this section. In this paper, although all actions of assigning/dispatching, accepting, and pricing jobs occur at the beginning of every slot, we will simply say that they occur at slot $t$ for convenience of exposition.

\begin{figure}[t]
\begin{center}
  \includegraphics[width=3.65in]{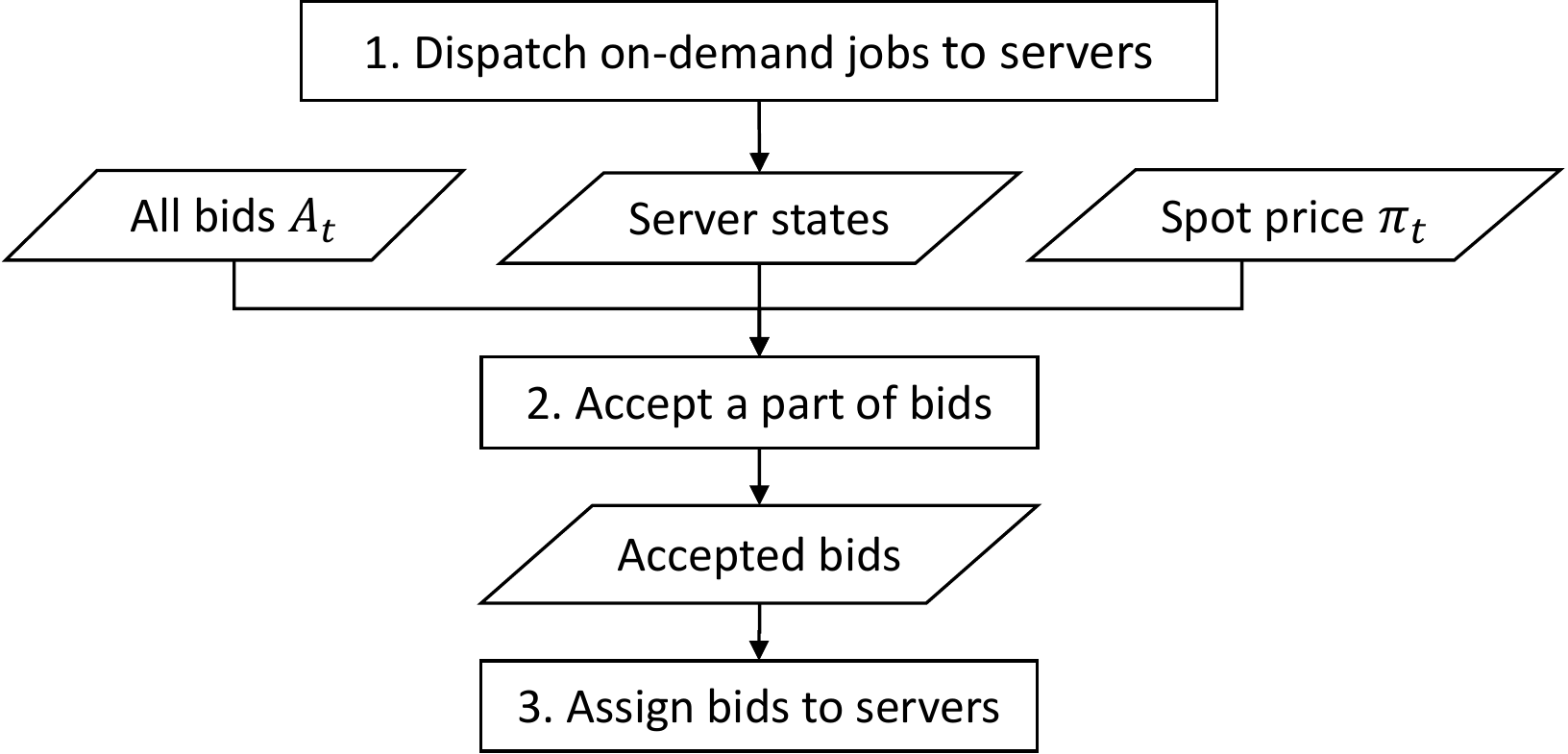}
  \caption{A flowchart for assigning jobs and accepting bids at slot $t$ where $t=1, 2, \cdots$: a rectangle represents a process that executes some operations while a parallelogram represents the input or output of a process.}\label{Fig.framework-1}
\end{center}
\end{figure}



\subsection{Dispatching High Priority of On-demand Jobs}
\label{sec.on-demand-jobs}

Each on-demand user requests a time slot interval $[a_{j}, d_{j}]$ in which it can occupy a server to execute its workload where $a_{j}$ and $d_{j}$ are positive integers. We refer to such a request as an on-demand job $j$, and its arrival time, deadline and size are $a_{j}$, $d_{j}$, and $s_{j}=d_{j}-a_{j}+1$ respectively. At every slot $t$, on-demand jobs with $a_{j}=t$ are dispatched to one of the $m$ servers under some policy. In IaaS services, examples of the commonly used policies include (\rmnum{1}) {\em Random}: for every job $j$, choose one of the $m$ servers with the probability $\frac{1}{m}$ and assign it to this server \cite{Zheng16a,Rasley16a}, (\rmnum{2}) {\em Round-Robin} (RR): jobs are assigned to servers in a cyclical fashion with the $j$-th job being assigned to the $l$-th server where $l=j\, mod\, m$ \cite{Wang14a}, (\rmnum{3}) {\em Power of Two Choices} (PTC): for every job $j$, randomly choose two servers, probe them, and, assign it to the server with less queued jobs \cite{Ousterhout13a,Mitzenmacher01a}.
The RR and Random policies are simple to implement and have similar performance. The PTC policy is more advanced but recently has been applied to the cluster management practice \cite{Ousterhout13a}; it can achieve a higher utilization of servers \cite{Mitzenmacher01a}. Once the job $j$ is dispatched to a server, the server will be occupied by the on-demand job owner during the period $[a_{j}, d_{j}]$, i.e., from the beginning of slot $a_{j}$ until the end of slot $d_{j}$.



On-demand instances are charged a fixed price; their users can be delay-sensitive and have no willingness to tolerate queuing delay: the current practice to guarantee quick delivery of on-demand instances to users is overprovisioning servers for on-demand jobs. As a result, while processing on-demand jobs, many servers actually remain unoccupied in the long run. The idleness in the on-demand market will be shown in Section~\ref{sec.on-demand-idleness} through experiments and available theoretical results, e.g., the load of the on-demand market corresponds to more than 85\% servers in idle mode.
After the job assignment at $t$, a server is either occupied by an on-demand job or idle in the entire period of slot $t$. We use $M_{t}$ ({\em resp.} $\overline{M}_{t}$) to denote the number of idle ({\em resp.} occupied) servers in the period of $t$, where $\overline{M}_{t}+M_{t}=m$. The idleness of on-demand market brings the necessity of introducing spot instances into cloud market \cite{Devanur17}; in particular, to be economically efficient, what we can do at every slot $t$ is as follows:
\begin{itemize}
\item sell the idle states of on-demand market in the period of slot $t$ in the form of spot instances; they will be accessed by spot users via bidding, and their amount is $M_{t}$.
\item however, on-demand jobs have higher priority to access servers, i.e., once new on-demand jobs arrive at slot $t+1$, the instances assigned to spot users at $t$ may be preempted arbitrarily at the beginning of $t+1$ to serve on-demand jobs if necessary.
\end{itemize}
The additional sales of spot instances are supposed to improve the overall resource efficiency of cloud market, in contrast to a pure on-demand market.

\subsection{Admission Control of Spot Jobs via Pricing}
\label{sec.pricing}

At every slot $t$ where $t=1, 2, \cdots$, there are $A_{t}$ users who bid prices to utilize spot instances in the period of slot $t$; they are usually delay-tolerant. The set of these users' bids is denoted by $\mathcal{A}_{t}$. We assume without loss of generality that each user bids for one spot instance. Once the bid\footnote{In this paper, each bid at a slot corresponds to a spot job who aims to utilize an instance for one slot once accepted; we shall use the terms "bids" and "spot jobs" interchangeably.} of a user is accepted, it gets allocated one spot instance at $t$. Let $\pi_{t}$ denote the spot price at $t$. Following the spirit of spot pricing described in Section~\ref{sec.preliminary}, $\pi_{t}$ is a control parameter: only the bids whose prices are not below $\pi_{t}$ are accepted. The number of idle on-demand instances $M_{t}$ defines the capacity of spot market, i.e., the number of bids accepted at $t$ cannot exceed $M_{t}$.

\begin{figure}[t]
\begin{center}
  \includegraphics[width=3.25in]{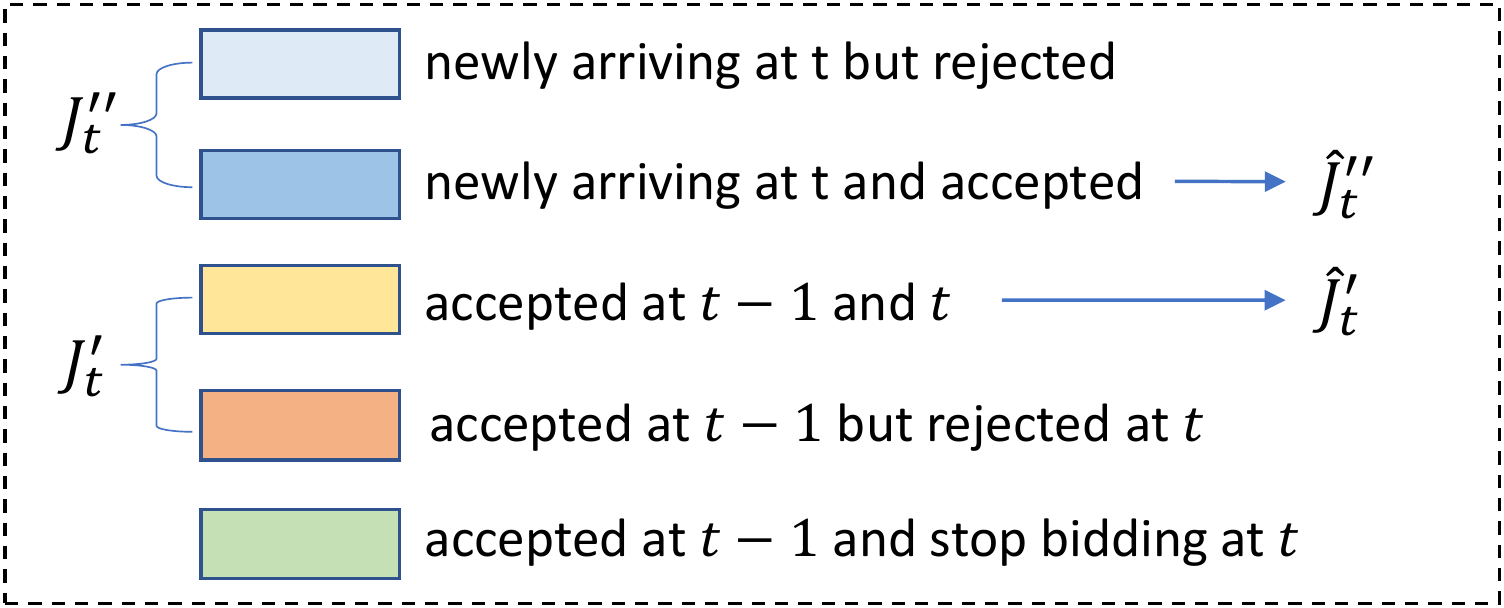}
  \caption{Color chart for the classification of spot jobs/bids available at $t$ when $t\geq 2$: we note that, when $t=1$, there are only bids of $\mathcal{J}_{t}^{\prime\prime}$ and no other bids, e.g., $\mathcal{J}_{t}^{\prime}=\emptyset$.}\label{Fig.classification-spot-jobs}
\end{center}
\end{figure}

Before defining a scheme for accepting bids, we first classify the bids available at $t$ from a system administrator's perspective. This will help us describe which bids are accepted and understand how to assign these bids to servers in the next subsection. A physical spot user may bid once or several times in order to utilize spot instances at one or several slots; every time, its bid may be accepted or rejected. It stops bidding when either its entire job is completed or it wants to use other resources (e.g., on-demand instances) to complete its job. If a spot user's bid is accepted at several consecutive slots, its spot job may be assigned to and executed on the same server in this period; however, once its bid is rejected at a slot and accepted at a later slot $t$, it can be viewed as a user that newly arrive at $t$ since it is not associated with and can be assigned to any server at $t$.

Thus, we will classify the bids of spot users at $t$ according to their bidding behaviour at the adjacent slot $t-1$, and there are two types of bids where $\mathcal{A}_{t}=\mathcal{J}_{t}^{\prime}\cup \mathcal{J}_{t}^{\prime\prime}$:
\begin{itemize}
\item [(\rmnum{1})] $\mathcal{J}_{t}^{\prime}$: the bids of the spot users whose bids were also accepted at slot $t-1$;
\item [(\rmnum{2})] $\mathcal{J}_{t}^{\prime\prime}$: the bids of the spot users who newly arrive in the period of $[(t-1)\cdot k, t\cdot k)$.
\end{itemize}
Specially, if $t=1$, $\mathcal{J}_{t}^{\prime}=\emptyset$ since there are no bids accepted at the initial slot $t-1=0$ during which there are newly arriving bids alone. For each bid of $\mathcal{J}_{t}^{\prime}$, its owner also bids at $t-1$; as seen later, at $t-1$ and $t$, the two bids will be assigned to the same server when there is no interference from the high priority of on-demand jobs. Finally, $\mathcal{A}_{t}$ denotes all bids available at $t$: $\mathcal{J}_{t}^{\prime}$ is illustrated by the orange and gold rectangles in Fig.~\ref{Fig.classification-spot-jobs}; $\mathcal{J}_{t}^{\prime\prime}$ is illustrated by the light and heavy blue rectangles.


\vspace{0.25em}\noindent\textbf{Acceptance of spot jobs.} The decision of accepting bids is made at the beginning of slot $t$. To describe the bids accepted at a slot, we define a function as follows. Generally, let $\mathcal{J}=\{1, 2, \cdots, J\}$ denote a set of users who submit bids where $J=|\mathcal{J}|$, and $\mathcal{V}=\{v_{1}, v_{2}, \cdots, v_{J}\}$ denote the set of their bid prices where $v_{1}\geq v_{2}\geq \cdots\geq v_{J}$. Let $\alpha$ denote a non-negative real number; we define a function as follows:
\begin{equation}
F(\alpha, \mathcal{J}) = \{i \,|\, v_{i}\geq \alpha, i\in\mathcal{J} \}.
\end{equation}
$F(\alpha,\, \mathcal{J})$ denotes all users whose bid prices are no smaller than $\alpha$. The number of the bids whose prices are not below $\pi_{t}$ is $|F(\pi_{t},\, \mathcal{A}_{t})|$. The procedure for determining which bids are accepted is presented in Algorithm~\ref{procedure-accepting-bids}; at slot $t$, the number of accepted bids is the minimum of $M_{t}$ and $|F(\pi_{t},\, \mathcal{A}_{t})|$, i.e.,
\begin{equation}\label{equa-accepted-bids}
N_{t} = \min\{M_{t},\, |F(\pi_{t},\, \mathcal{A}_{t})|\}.
\end{equation}

\begin{algorithm}[t]
	\SetKwInOut{Input}{Input}
	\SetKwInOut{Output}{Output}	
	
	\BlankLine	


\If{$|F(\pi_{t},\, \mathcal{A}_{t})|\leq M_{t}$}{

      accept $|F(\pi_{t},\, \mathcal{A}_{t})|$ bids of $\mathcal{A}_{t}$ with the highest bid prices\;
}
\Else{
      accept $M_{t}$ bids of $\mathcal{A}_{t}$ with the highest bid prices\;
}

	\caption{Determination of bids accepted at $t$\label{procedure-accepting-bids}}
\end{algorithm}

Also, we note that $M_{t}$ and $\mathcal{A}_{t}$ are observable at the beginning of slot $t$ and thus known by CSP. We denote by $\hat{\mathcal{J}}_{t}^{\prime}$ (resp. $\hat{\mathcal{J}}_{t}^{\prime\prime}$) the accepted bids of $\mathcal{J}_{t}^{\prime}$ (resp. $\mathcal{J}_{t}^{\prime\prime}$).
Here, $\hat{\mathcal{J}}_{t}^{\prime}$ and $\hat{\mathcal{J}}_{t}^{\prime\prime}$ are illustrated by gold and heavy blue rectangles in Fig.~\ref{Fig.classification-spot-jobs}. In the rest of this section we shall refer to the color chart of  Fig.~\ref{Fig.classification-spot-jobs} to support the description of job assignment with a graphical representation.



\subsection{Assignment of Spot Jobs to Servers}
\label{sec.dispatch}

\begin{figure}[t]
\begin{center}
  \includegraphics[width=4.45in]{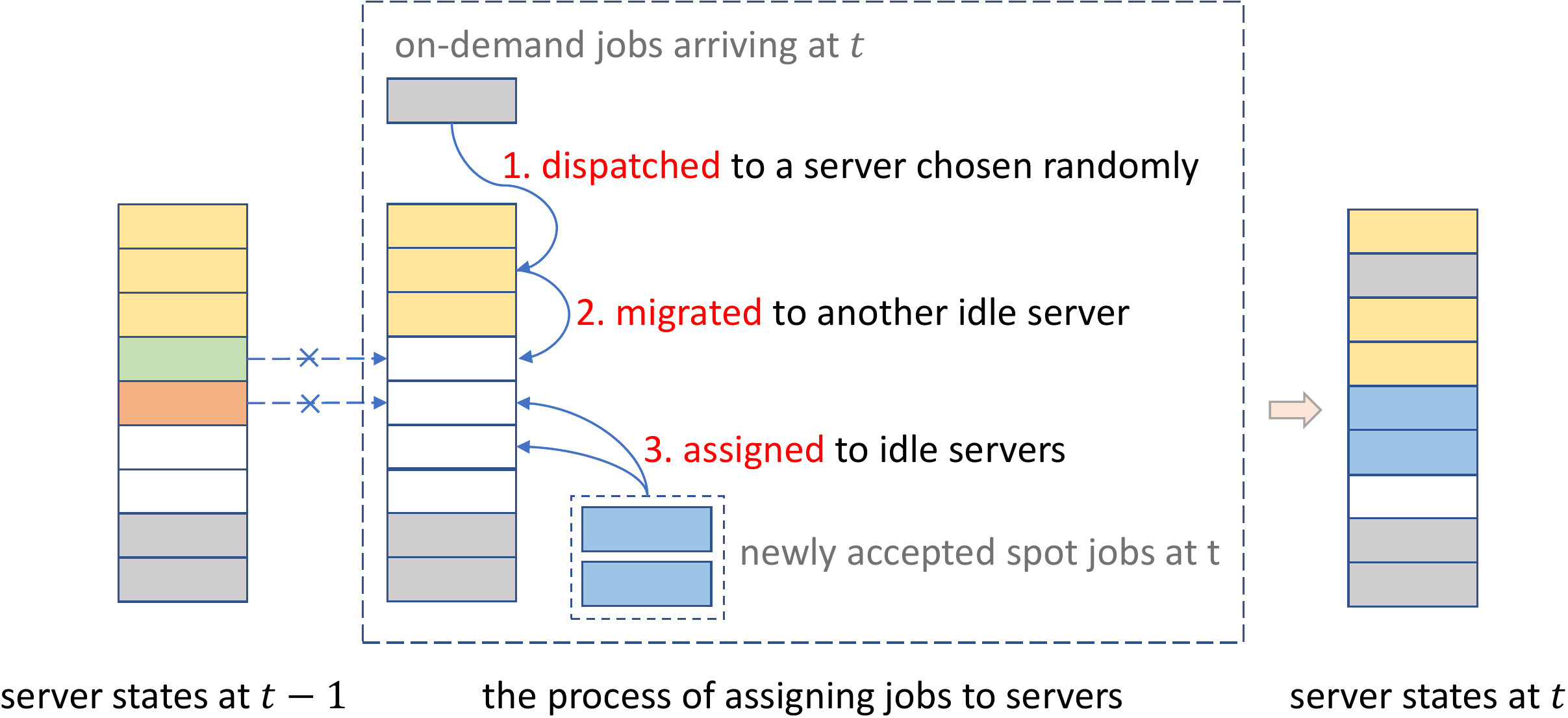}
  \caption{Assignment of on-demand and spot jobs to servers at the beginning of slot $t$ when $t\geq 2$: a colored rectangle denotes either a job of some type or the state of a server (i.e., a server occupied by this type of jobs); the meaning of different colors is partially summarized in Fig.~\ref{Fig.classification-spot-jobs}; a blank rectangle denotes an idle server.}\label{Fig.managing-spot-on-demand-instances}
\end{center}
\end{figure}

At every slot $t=1, 2, \cdots$, all the arriving on-demand jobs will be accepted but only a part of bids may be accepted. On-demand jobs have higher priority; when dispatching them to servers, the existence of spot jobs is ignored as if there is a pure on-demand market. The scheme for assigning jobs to servers is as follows:
\begin{description}
\item [Step 1.] On-demand jobs are dispatched to servers using some policy such as "Random", "RR" or "PTC".
\item [Step 2.] In the case that $t\geq 2$, if an on-demand job is dispatched to a server that was executing a spot job in $\hat{\mathcal{J}}_{t}^{\prime}$ in the period of slot $t-1$, this spot job is migrated to another idle server; the other jobs in $\hat{\mathcal{J}}_{t}^{\prime}$ are still executed on the same instances. In the case that $t=1$, go to the next step directly since $\hat{\mathcal{J}}_{t}^{\prime}=\emptyset$.
\item [Step 3.] the spot jobs of $\hat{\mathcal{J}}_{t}^{\prime\prime}$ are randomly dispatched to the remaining idle servers.
\end{description}
In the period of slot 0, all servers are in idle states. At the beginning of slot 1, there are only bids of $\hat{\mathcal{J}}_{t}^{\prime\prime}$ and no bids in $\hat{\mathcal{J}}_{t}^{\prime}$; these bids are assigned to servers after the assignment of on-demand jobs. When $t\geq 2$, a part of servers might be occupied by on-demand and spot jobs in the period of slot $t-1$; the process of assigning jobs to servers is also illustrated in Fig.~\ref{Fig.managing-spot-on-demand-instances}. We note that, when $t=1$,
the states of all servers at $t-1$ will be represented by blank rectangles and the second step in the process of assigning jobs to servers could be removed since nothing is executed.


\section{Optimal Pricing of Spot Instances}
\label{sec.optimal-spot-pricing}

In this section, we shall determine the optimal spot price to maximize the CSP's revenue at every slot $t$. We denote it by $\pi_{t}^{*}$: it is the spot price finally announced to users at $t$. It actually determines the subset of bids accepted by Algorithm~\ref{procedure-accepting-bids}. Afterwards, we will conclude Sec.~\ref{sec.basic-resource-management} and Sec.~\ref{sec.optimal-spot-pricing} by showing the whole framework for running spot and on-demand services.


As is formally shown latter, the optimal spot price may be the bid price of some user. This can be intuitively perceived by contradiction: if the value of $\pi_{t}^{*}$ is between two bid prices, the revenue of a CSP will be increased by resetting $\pi_{t}^{*}$ to the larger bid price, and doing so does not affect the acceptance of spot jobs. To determine which bid price can maximize the revenue, we need to characterize the CSP's revenue function under an arbitrary spot price $\pi_{t}$. Under $\pi_{t}$, the accepted bids $\hat{\mathcal{J}}_{t}^{\prime}\cup\hat{\mathcal{J}}_{t}^{\prime\prime}$ are determined by the framework in Section~\ref{sec.basic-resource-management} (illustrated in Fig.~\ref{Fig.framework-1}). The framework for pricing spot instances at $t$ is illustrated in Fig.~\ref{Fig.framework-2} and will be elaborated in the following.

\begin{figure}[t]
\begin{center}
  \includegraphics[width=4.95in]{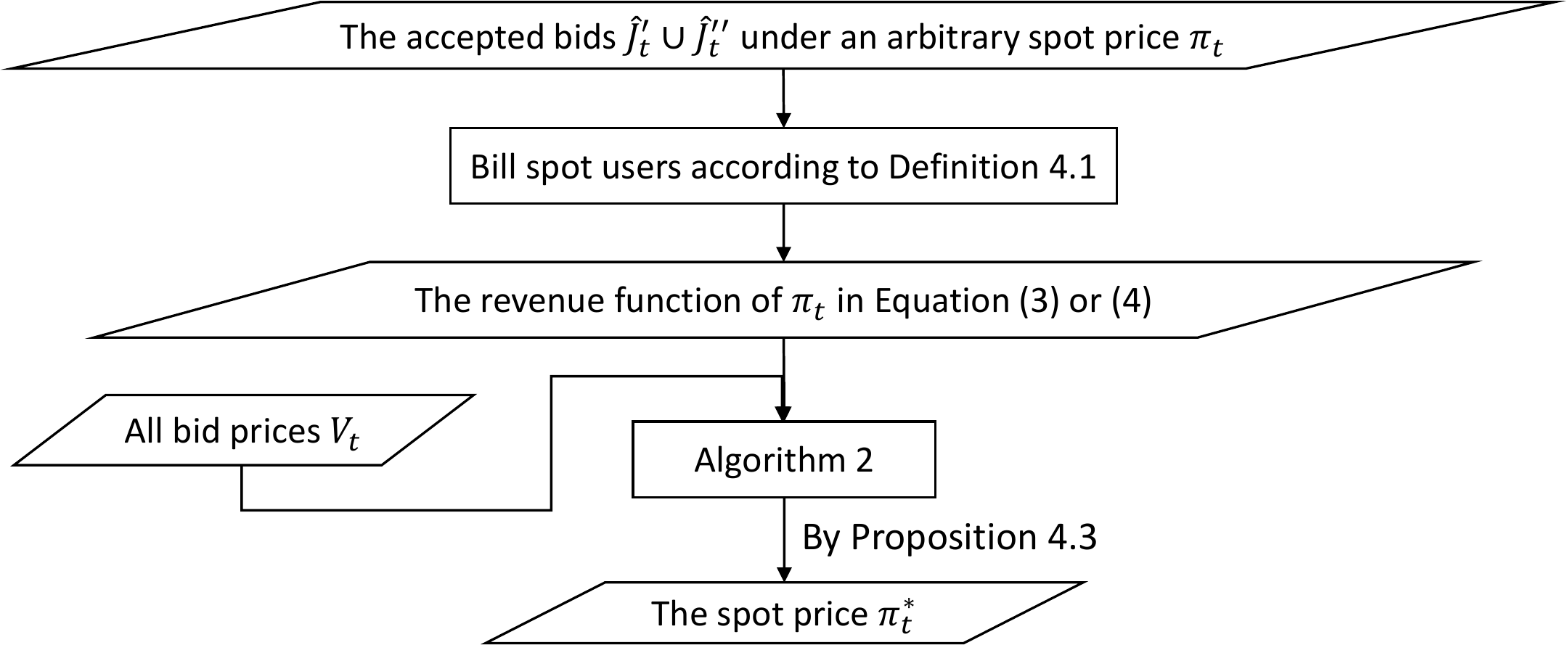}
  \caption{The optimal pricing of spot instances at the beginning of slot $t$.}\label{Fig.framework-2}
\end{center}
\end{figure}

\subsection{Billing and Revenue}

We first define the value that a CSP gets from each spot job accepted at $t$, i.e., the way of billing. The reference time interval for billing an on-demand user is made of $L$ consecutive slots. The price of utilizing an instance for one interval is $p$: we recall that if a user utilizes just a fraction of an interval, it is still charged the fee of the entire interval.

Conversely, for each accepted bid, we should define the billing way according to the effective utilization time of a spot instance. We observe that, although a spot instance is assigned by reserving a whole slot time, there may be yet some overhead. In fact, in some cases a process of startup/migration is needed: it takes some time during which the job is in service but inactive. The spot price at slot $t$ is $\pi_{t}$, and, to be comparable with the on-demand price, $\pi_{t}$ is the price of utilizing a spot instance for an interval of $L$ slots. However, we shall use a proper normalization, since $\pi_{t}$ is limited to represent the spot price of slot $t$. For example, if a spot user effectively utilizes an instance for one minute during the slot $t$, the actual charged price will be $\pi_{t}/(L\cdot k)$, where a slot contains $k$ minutes.


Now, we observe the period in which servers are effectively utilized by spot jobs after the job assignment at $t$. Recall the job assignment process in Section~\ref{sec.dispatch}. For spot jobs newly accepted $\hat{\mathcal{J}}_{t}^{\prime\prime}$, their virtual machine images (VMIs) need to be loaded to the assigned servers. As illustrated in Fig.~\ref{Fig.managing-spot-on-demand-instances}, some spot jobs of $\hat{\mathcal{J}}_{t}^{\prime}$ may need migration, denoted by $\hat{\mathcal{J}}_{t,1}^{\prime}$, and their VMIs also need be migrated to other servers. As shown in related studies \cite{Mao12a,Razavi13a}, for state of art technology, the process of loading or migrating VMIs takes about 3 minutes; generally, we denote the time consumed for this process by $k^{\prime}$ minutes. Thus, for the jobs of $\hat{\mathcal{J}}_{t}^{\prime\prime}\cup \hat{\mathcal{J}}_{t,1}^{\prime}$, although the whole period of a slot will be dedicated to them, only $k-k^{\prime}$ minutes are effectively utilized, with $k^{\prime}$ minutes not utilized for actual service. For the other spot jobs of $\hat{\mathcal{J}}_{t}^{\prime}$, they were ever executed on some servers at $t-1$ and will still be executed on the same servers at $t$; we denote these bids by $\hat{\mathcal{J}}_{t,2}^{\prime}$ where $\hat{\mathcal{J}}_{t,2}^{\prime}=\hat{\mathcal{J}}_{t}^{\prime}-\hat{\mathcal{J}}_{t,1}^{\prime}$. The bids of $\hat{\mathcal{J}}_{t,2}^{\prime}$ can effectively utilize the whole slot $t$. In order to guarantee that the effective server utilization time of every accepted bid is positive, the following relation needs to be satisfied:
\begin{center}
$k^{\prime}<k$.
\end{center}
Let $\beta=\frac{k^{\prime}}{k}$ where $\beta\in (0, 1)$, and every spot job will be charged for the period in which the servers are effectively utilized. Thus, we have the following definition.
\begin{definition}\label{def-billing-rule}
The way of billing a spot user at $t$ is as follows: (\rmnum{1}) every accepted bid in $\hat{\mathcal{J}}_{t}^{\prime\prime}\cup \hat{\mathcal{J}}_{t,1}^{\prime}$ is charged $(1-\beta)\cdot \frac{\pi_{t}}{L}$, and (\rmnum{2}) every accepted bid in $\hat{\mathcal{J}}_{t,2}^{\prime}$ is charged $\frac{\pi_{t}}{L}$.
\end{definition}

Now, we characterize the revenue function and show which system information is observable at the beginning of $t$. Given any spot price $\pi_{t}$, the accepted bids (i.e., $\hat{\mathcal{J}}_{t}^{\prime\prime}$, $\hat{\mathcal{J}}_{t,1}^{\prime}$, $\hat{\mathcal{J}}_{t,2}^{\prime}$) are determined by Algorithm~\ref{procedure-accepting-bids}, before which on-demand jobs have been assigned to servers. Also, we know the locations of the servers to which the bids accepted at $t-1$ are assigned, as illustrated in Fig.~\ref{Fig.managing-spot-on-demand-instances}. So, after the bids accepted at $t$ are determined, we could know the bids respectively in $\hat{\mathcal{J}}_{t,1}^{\prime}$, $\hat{\mathcal{J}}_{t,2}^{\prime}$ and $\hat{\mathcal{J}}_{t}^{\prime\prime}$. Let $f_{t}$ denote the total number of the accepted bids of $\hat{\mathcal{J}}_{t}^{\prime\prime}\cup \hat{\mathcal{J}}_{t,1}^{\prime}$; $f_{t}$ is also observable after the accepted bids are determined. The total number of the bids accepted at $t$ is $N_{t}$; so, $|\hat{\mathcal{J}}_{t,2}^{\prime}| = N_{t}-f_{t}$. The revenue of the spot market at $t$ is the sum of the charges of all accepted bids, denoted by $\mathcal{G}(t)$; with the billing policy in Definition~\ref{def-billing-rule}, we have
\begin{equation}\label{equa-gain}
\begin{split}
\mathcal{G}(t)  = (N_{t}-f_{t})\cdot \frac{\pi_{t}}{L} + f_{t}\cdot (1-\beta)\cdot \frac{\pi_{t}}{L}
                = \left(N_{t} - \beta\cdot f_{t}\right)\cdot \frac{\pi_{t}}{L}
\end{split}
\end{equation}
where $N_{t}$ is given in (\ref{equa-accepted-bids}), $L$ and $\beta$ are system parameters, and $f_{t}$ is observable.

\subsection{Pricing Decision}

Our decision-making problem is determining the optimal $\pi_{t}$ to maximize the spot market's revenue $\mathcal{G}(t)$ at slot $t$. By (\ref{equa-accepted-bids}), $N_{t}$ is a function of $\pi_{t}$, $\mathcal{A}_{t}$ and $M_{t}$; the revenue function $\mathcal{G}(t)$ in (\ref{equa-gain}) can be expressed as a function of the single variable $\pi_{t}$, i.e.,
\begin{align}\label{equa-gain-0}
\mathcal{G}(t) = \hat{\mathcal{G}}\left(\pi_{t},\, M_{t},\, \mathcal{A}_{t},\, f_{t}\right)
\end{align}
where parameters $M_{t}$, $\mathcal{A}_{t}$ and $f_{t}$ are observable at slot $t$.
Sort the spot jobs of $\mathcal{A}_{t}$ in the non-increasing order of their bid prices; let $v_{j}$ denote the bid price of the $j$-th spot job where
\begin{align}\label{bid-order}
v_{1}\geq v_{2}\geq \cdots\geq v_{A_{t}},
\end{align}
where $A_{t}=|\mathcal{A}_{t}|$. Let $v_{0}=v_{1}+1>v_{1}$, and $\mathcal{V}_{t}=\{v_{0}, v_{1}, \cdots, v_{A_{t}}\}$; then, we draw the following conclusion.

\begin{lemma}\label{lemma-optimal-price}
In order to maximize the revenue of spot market at slot $t$, the optimal spot price is such that $\pi_{t}^{*}\in \mathcal{V}_{t}$. Let $N_{t}^{*}$ denote the number of bids accepted at $t$ and we have $\pi_{t}^{*}=v_{N_{t}^{*}}$.
\end{lemma}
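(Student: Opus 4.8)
The plan is to exploit the fact that the revenue $\mathcal{G}(t)$ in~(\ref{equa-gain}) depends on the continuous variable $\pi_t$ only through the discrete set of accepted bids, which is itself a step function of $\pi_t$. First I would partition the admissible range $\pi_t\geq 0$ into the half-open intervals $(v_{j+1},\, v_j]$ for $j=1,\dots,A_t-1$, together with the bottom interval $[0,\, v_{A_t}]$ and the top interval $(v_1,\, +\infty)$. On each such interval the set $F(\pi_t,\mathcal{A}_t)=\{i: v_i\geq \pi_t\}$ is constant, hence $|F(\pi_t,\mathcal{A}_t)|$ is constant, and therefore $N_t=\min\{M_t,\,|F(\pi_t,\mathcal{A}_t)|\}$ is constant; since Algorithm~\ref{procedure-accepting-bids} then accepts the same top-$N_t$ bids throughout the interval, the induced quantities $\hat{\mathcal{J}}_{t,1}'$, $\hat{\mathcal{J}}_t''$ and hence $f_t$ are also constant there. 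Consequently, on each interval $\mathcal{G}(t)=\frac{1}{L}(N_t-\beta f_t)\,\pi_t$ is a linear function of $\pi_t$ with a fixed slope.

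The key step is to sign that slope. Because $f_t=|\hat{\mathcal{J}}_t''\cup\hat{\mathcal{J}}_{t,1}'|$ counts a subset of the accepted bids, we have $f_t\leq N_t$, and with $\beta\in(0,1)$ this yields $N_t-\beta f_t\geq (1-\beta)N_t\geq 0$, with strict positivity whenever $N_t>0$. Thus $\mathcal{G}(t)$ is non-decreasing in $\pi_t$ on every interval, and strictly increasing on any interval with $N_t>0$. Hence the maximum over each interval is attained at its right endpoint, one of the points $v_1,\dots,v_{A_t}$, while the reject-all regime $\pi_t>v_1$ (where $N_t=0$ and $\mathcal{G}(t)=0$) is represented by $\pi_t=v_0$. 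This is exactly the contradiction argument sketched before the lemma: nudging a price lying strictly between two consecutive bid values up to the larger one raises the price factor without altering the accepted set. It follows that the search for $\pi_t^*$ may be restricted to $\mathcal{V}_t=\{v_0,v_1,\dots,v_{A_t}\}$, which proves the first assertion.

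It remains to identify the optimizer as $\pi_t^*=v_{N_t^*}$. For this I would argue that the candidates $\pi_t=v_j$ with $j>M_t$ are redundant: when $j>M_t$ the capacity binds, so $N_t=M_t$ and the accepted set (hence $f_t$) coincides with that obtained at $\pi_t=v_{M_t}$, while the price factor satisfies $v_j\leq v_{M_t}$; therefore $v_{M_t}$ weakly dominates every such $v_j$. Restricting to $j\in\{0,1,\dots,\min\{M_t,A_t\}\}$, on these candidates the capacity does not strictly bind and $N_t(v_j)=j$, so whichever index $j^\star$ maximizes $\mathcal{G}(t)$ satisfies $N_t^*=j^\star$ and $\pi_t^*=v_{j^\star}=v_{N_t^*}$; the degenerate case $N_t^*=0$ corresponds to $\pi_t^*=v_0$, consistent with the convention $v_0=v_1+1$.

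I expect the main obstacle to be the coupling between $\pi_t$ and $f_t$: unlike a textbook posted-price problem, the slope $N_t-\beta f_t$ is neither a priori monotone nor continuous in $\pi_t$, so the whole argument hinges on first establishing that the accepted set, and therefore $f_t$, is locally constant, and then on the sign bound $f_t\leq N_t$. A secondary technical point is the treatment of tied bid prices: when several bids share the value $v_{N_t^*}$, the set $F(v_{N_t^*},\mathcal{A}_t)$ may strictly exceed $N_t^*$, and one must invoke the tie-breaking implicit in Algorithm~\ref{procedure-accepting-bids} (accepting the $M_t$ highest-priced bids) to keep $N_t=\min\{M_t,\,|F|\}$ well defined and preserve the identity $\pi_t^*=v_{N_t^*}$.
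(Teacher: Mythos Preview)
Your proposal is correct and follows essentially the same approach as the paper: both arguments hinge on the observation that, once the set of accepted bids is fixed, $\mathcal{G}(t)=\frac{1}{L}(N_t-\beta f_t)\,\pi_t$ is increasing in $\pi_t$ because $f_t\le N_t$ and $\beta\in(0,1)$, so the price should be pushed up to the highest value compatible with that accepted set, namely $v_{N_t^*}$. The paper's version is simply more compressed---it fixes the optimal count $N_t^*$ directly and argues $\pi_t^*\le v_{N_t^*}$ with equality forced by monotonicity---whereas you spell out the interval decomposition and the redundancy of candidates with $j>M_t$; but the core idea and the key inequality are identical.
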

\begin{proof}
Suppose that in an optimal solution $N_{t}^{*}$ bids are accepted; as defined in Section~\ref{sec.pricing}, only the bids whose prices are no smaller than $\pi_{t}^{*}$ are possibly accepted, and these bids are the $N_{t}^{*}$ bids of $\mathcal{A}_{t}$ with the highest bid prices. If $N_{t}^{*}=0$, no bid is accepted at $t$ and $\pi_{t}$ can be an arbitrary value larger than $v_{1}$; so, $\pi_{t}^{*}$ can be set to $v_{0}$. If $N_{t}^{*}>0$, the optimal spot price $\pi_{t}^{*}\leq v_{N_{t}^{*}}$. Then, the CSP's revenue function $\mathcal{G}(t)$, given in (\ref{equa-gain}), is maximized when setting $\pi_{t}^{*}$ to the highest possible price, i.e. $\pi_{t}^{*}=v_{N_{t}^{*}}$. The reason for this is that, when $\pi_{t}\in [0, v_{N_{t}^{*}}]$ and the number of accepted bids $N_{t}$ is fixed and equals $N_{t}^{*}$, $\mathcal{G}(t)$ is an increasing function of $\pi_{t}$ since $N_{t} - \beta\cdot f_{t}>0$ where $N_{t}\geq f_{t}$ and $\beta\in (0, 1)$. Finally, the lemma holds.
\end{proof}


\begin{proposition}\label{proposi-optimal-price}
The optimal spot price $\pi_{t}^{*}$ at slot $t$ is as follows:
\begin{equation}\label{optimal-pricing-decision}
\pi_{t}^{*}\leftarrow \arg\max\limits_{\pi_{t}\in\mathcal{V}_{t}}{\hat{\mathcal{G}}(\pi_{t}, M_{t}, \mathcal{A}_{t}, f_{t})}.
\end{equation}
\end{proposition}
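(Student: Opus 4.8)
The plan is to treat this proposition as an immediate corollary of Lemma~\ref{lemma-optimal-price}, whose argument has already done the substantive work of reducing a continuous pricing problem to a discrete one. First I would restate the structure of the decision problem: the CSP may announce any spot price $\pi_{t}\geq 0$, and each such choice determines the revenue $\mathcal{G}(t)=\hat{\mathcal{G}}(\pi_{t}, M_{t}, \mathcal{A}_{t}, f_{t})$ of (\ref{equa-gain-0}), because $\pi_{t}$ fixes the accepted set through Algorithm~\ref{procedure-accepting-bids} and hence fixes $N_{t}$ and $f_{t}$. The object sought is a global maximizer of $\mathcal{G}(t)$ over the admissible price range.

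The key step is to invoke Lemma~\ref{lemma-optimal-price}: any revenue-maximizing price must equal $v_{N_{t}^{*}}$ for the optimal acceptance count $N_{t}^{*}$, and in particular it belongs to the finite candidate set $\mathcal{V}_{t}=\{v_{0}, v_{1}, \cdots, v_{A_{t}}\}$, with $v_{0}$ covering the degenerate case $N_{t}^{*}=0$. Consequently, narrowing attention from all nonnegative prices to the $A_{t}+1$ points of $\mathcal{V}_{t}$ loses no optimality, since a global optimizer already lies inside $\mathcal{V}_{t}$. Maximizing a function over a set that is guaranteed to contain a global optimizer coincides with maximizing it globally, so $\arg\max\limits_{\pi_{t}\in\mathcal{V}_{t}}\hat{\mathcal{G}}(\pi_{t}, M_{t}, \mathcal{A}_{t}, f_{t})$ returns an optimal spot price, which is exactly the prescription (\ref{optimal-pricing-decision}).

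The one point needing care, and the only genuine obstacle I anticipate, is the dependence of $f_{t}$ on $\pi_{t}$. Although $M_{t}$ and $\mathcal{A}_{t}$ are fixed observables at slot $t$, the quantity $f_{t}=|\hat{\mathcal{J}}_{t}^{\prime\prime}\cup \hat{\mathcal{J}}_{t,1}^{\prime}|$ is itself a function of which bids are accepted, and therefore of $\pi_{t}$; so the notation $\hat{\mathcal{G}}(\pi_{t}, M_{t}, \mathcal{A}_{t}, f_{t})$ should not mislead one into treating $f_{t}$ as a free constant. I would make explicit that when $\hat{\mathcal{G}}$ is evaluated at a candidate $\pi_{t}=v_{j}$, the value of $f_{t}$ used is the one induced by that price: each $v_{j}\in\mathcal{V}_{t}$ pins down a unique count $N_{t}=\min\{M_{t},\, |F(v_{j}, \mathcal{A}_{t})|\}$ and hence a determinate top-$N_{t}$ acceptance set, so the composite objective is a well-defined function of the single argument $\pi_{t}\in\mathcal{V}_{t}$. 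With this clarified, the evaluation over $\mathcal{V}_{t}$ is finite and unambiguous and the proposition follows directly from the lemma. I would close by remarking that the optimal price is thereby computable by at most $A_{t}+1$ evaluations of $\hat{\mathcal{G}}$, i.e.\ in time linear in the number of bids once they have been sorted as in (\ref{bid-order}).
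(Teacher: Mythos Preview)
Your proposal is correct and follows the same approach as the paper: invoke Lemma~\ref{lemma-optimal-price} to conclude that an optimal price already lies in the finite set $\mathcal{V}_{t}$, so maximizing $\hat{\mathcal{G}}$ over $\mathcal{V}_{t}$ yields $\pi_{t}^{*}$. Your added remark about the implicit dependence of $f_{t}$ on $\pi_{t}$ is a worthwhile clarification that the paper's one-line proof omits, but the underlying argument is identical.
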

\begin{proof}
The optimal spot price at $t$ is some value in $\mathcal{V}_{t}$ under which $\mathcal{G}(t)$ achieves the maximal value, by Lemma~\ref{lemma-optimal-price}; hence, the proposition holds.
\end{proof}

\begin{algorithm}[t]
	\SetKwInOut{Input}{Input}
	\SetKwInOut{Output}{Output}	
	
	\BlankLine	

\setstretch{1.1}

$\pi^{\prime}\leftarrow 0$,\, $G^{\prime}\leftarrow 0$\;

\For{$i\leftarrow 0$ \KwTo $A_{t}$}{

    $N_{t} \leftarrow \min\{M_{t},\, |\mathcal{F}(v_{i}, \mathcal{A}_{t})|\}$\;

    $G \leftarrow N_{t}\cdot \frac{v_{i}}{L} - \beta\cdot f_{t}\cdot \frac{v_{i}}{L}$\tcp*{\footnotesize{the revenue of spot market if the spot price at $t$ is $v_{i}$}}

    \If{$G^{\prime} < G$}{
        $G^{\prime}\leftarrow G$,\, $\pi^{\prime}\leftarrow v_{i}$\;
    }
}
$\pi_{t}^{*}\leftarrow \pi^{\prime}$\tcp*{\footnotesize{the optimal spot price at slot $t$}}

\caption{\enskip SpotiPrice($M_{t}$, $\mathcal{A}_{t}$, $f_{t}$, $L$, $\beta$)}\label{optimal-pricing}
\end{algorithm}

At every slot $t$, the expression (\ref{optimal-pricing-decision}) could be used to decide the optimal spot price, and the corresponding procedure is presented in Algorithm~\ref{optimal-pricing}: it checks every possible value in $\mathcal{V}_{t}$ to see which can maximize the revenue function (\ref{equa-gain}). {\em A key feature of our algorithm} is that such decisions are implementable in practice, since the CSP has full knowledge of all parameters in $\mathcal{G}(t)$ except the control parameter, i.e., the spot price $\pi_{t}$, at every $t$.

\subsection{Running Spot and On-demand Services}
\label{sec.framework-on-demand-spot-services}

So far, we have shown in Sec.~\ref{sec.basic-resource-management} and Sec.~\ref{sec.optimal-spot-pricing} an integral framework for running spot and on-demand services. Now, we explain how this framework works as a whole.

\vspace{0.12em}\noindent\textbf{Sharing model.} The discrete-time service model is proposed in Section~\ref{sec.model_colocating} for sharing server time among on-demand and spot jobs where time is divided into consecutive slots. The jobs that arrive in the period of slot $t-1$ will be assigned at the next slot $t$, as illustrated in Fig.~\ref{Fig.discrete-time-model}, where $t=1, 2, \cdots$. The action (i.e., job assignment) that changes the states of servers happens only at the beginning of each slot $t$ and their states keep constant along the time slot. After the job assignment, an on-demand job $j$ will utilize the server for $s_{j}$ slots while an accepted spot job can stably access the server for one slot.

\vspace{0.12em}\noindent\textbf{Job assignment, pricing and acceptance.} While running on-demand and spot services, several actions are coordinated to control the job's access to servers and they occur sequentially at every slot $t=1, 2, 3, \cdots$. At $t$, on-demand jobs are first dispatched to servers and then two processes happen. The first is the calculation of the optimal spot price $\pi_{t}^{*}$ at $t$: with the framework in Section~\ref{sec.basic-resource-management} (illustrated in Fig.~\ref{Fig.framework-1}), we could determine by Algorithm~\ref{procedure-accepting-bids} the bids accepted at $t$ under an arbitrary spot price $\pi_{t}$. Based on this, by the framework illustrated in Fig.~\ref{Fig.framework-2}, $\pi_{t}^{*}$ could be derived. The second process actually determines the acceptance and assignment of spot jobs: $\pi_{t}^{*}$ is the final spot price announced to users and is actually used as the input of the framework illustrated in Fig.~\ref{Fig.framework-1}; then, the actions of accepting bids and assigning spot jobs to servers happen.

We have concluded the presentation of the basic framework. Finally, we analyze the spot user's behavior from a game theory perspective. For every spot user $j$, the willingness-to-pay (WTP) of $j$ is the maximum price at or below which it can accept the spot service and we denote its WTP by $c_{j}$. If the bid of $j$ is accepted and it gets spot service at $t$, let $\varsigma_{j}$ denote the effective server utilization time during the slot $t$; let $\hat{o}_{j}=(c_{j}-\pi_{t})\cdot \frac{\varsigma_{j}}{L}$, and the payoff of $j$ equals $\hat{o}_{j}$. If its bid is rejected, the payoff equals zero. In any case, we denote by $o_{j}$ the payoff of $j$. A spot user $j$ will truthfully report its WTP if its payoff $o_{j}$ is maximal or at least not less by being truthful, regardless of what the others do.
\begin{proposition}\label{proposi-truthfulness}
At every slot $t$, when a user $j$ bids a price for spot instances, it will truthfully report its WTP to the CSP, i.e., its bid price $v_{j}$ equals its WTP $c_{j}$.
\end{proposition}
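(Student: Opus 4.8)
The plan is to fix an arbitrary slot $t$ together with an arbitrary profile of the other users' reports, and to show that reporting $v_j=c_j$ gives user $j$ a payoff $o_j$ that is at least as large as the payoff under any deviation $v_j\neq c_j$. The starting observation is that the payoff is $0$ whenever $j$ is rejected and $(c_j-\pi_t^*)\cdot\frac{\varsigma_j}{L}$ whenever $j$ is accepted, where $\varsigma_j/L>0$. Hence the sign of the accepted payoff is governed \emph{entirely} by the sign of $c_j-\pi_t^*$, and the whole argument can be organized around the position of the truthful report relative to the announced spot price $\pi_t^*$ delivered by Proposition~\ref{proposi-optimal-price}.

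First I would record the two structural facts needed from the earlier development. By Lemma~\ref{lemma-optimal-price} the announced price equals $v_{N_t^*}$, the lowest accepted bid, and by Algorithm~\ref{procedure-accepting-bids} a bid is accepted exactly when it is not below $\pi_t^*$ and lies among the $N_t^*$ highest submitted bids. In particular, acceptance of $j$ is \emph{monotone} in its report: raising $v_j$ can only move $j$ up in the sorted order and so can never turn an accepted bid into a rejected one, while lowering $v_j$ can never turn a rejected bid into an accepted one. This monotonicity is what lets me reduce every deviation to one of a small number of outcome changes.

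Then I would carry out the case analysis against the announced price $\pi_t^*$. Case (i): $c_j\geq \pi_t^*$, so the truthful report is accepted with payoff $(c_j-\pi_t^*)\cdot\frac{\varsigma_j}{L}\geq 0$; an upward deviation keeps $j$ accepted at the same price and cannot increase this, and a downward deviation either leaves the outcome unchanged or causes rejection with payoff $0$, which is no larger. Case (ii): $c_j<\pi_t^*$, so the truthful report is rejected with payoff $0$; a downward deviation keeps $j$ rejected, while an upward deviation can at most make $j$ accepted at price $\pi_t^*$, yielding $(c_j-\pi_t^*)\cdot\frac{\varsigma_j}{L}<0\leq 0$. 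In every subcase truthful reporting is a best response, which is precisely the claim.

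The hard part will be the step hidden in ``keeps the outcome unchanged'': I am implicitly assuming that when $j$ alters its report without changing its acceptance status, the price $\pi_t^*$ it actually pays does not move in its favor. For a winner that is not price-setting this is immediate, since $\pi_t^*=v_{N_t^*}$ is then pinned down by the other bids and is insensitive to $j$'s report; the genuinely delicate situation is the \emph{marginal} winner, whose own report coincides with $v_{N_t^*}$, because shading slightly could lower the clearing price it pays while keeping it accepted. To close this gap I would invoke the market-driven nature of the spot price emphasized in Section~\ref{sec.preliminary}: $\pi_t^*$ is the uniform clearing price faced by an individual price-taking bidder, so $j$ evaluates its report against a fixed $\pi_t^*$ rather than against a price it can single-handedly reshape. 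Under this price-taking view the four subcases above close without further hypotheses, and Proposition~\ref{proposi-truthfulness} follows.
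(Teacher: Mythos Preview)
Your proposal is correct and follows essentially the same approach as the paper. The paper's proof also treats $\pi_t$ as exogenous to bidder~$j$ (stated there as ``regardless of the value of $\pi_t$'') and then runs a case analysis on the relative positions of $v_j$, $c_j$, and $\pi_t$; it simply enumerates ten sub-cases explicitly rather than invoking the monotonicity of acceptance that you use to compress the argument. Your identification of the marginal-winner issue and its resolution via the price-taking convention is exactly the implicit hypothesis the paper relies on, made more explicit.
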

\begin{proof}
There are $A_{t}$ bids whose prices satisfy (\ref{bid-order}). By Lemma~\ref{lemma-optimal-price} and Algorithm~\ref{procedure-accepting-bids}, we have in an optimal solution that the $N_{t}$ bids with the highest prices will be accepted and $\pi_{t}=v_{N_{t}}$. Thus, we have that (1) all bids whose prices are larger than $\pi_{t}$ will be accepted, (2) a bid whose price equals $\pi_{t}$ is possibly accepted, subject to the capacity constraint, and (3) all bids whose prices are smaller $\pi_{t}$ will be rejected. It suffices to show that the payoff of $j$ is maximal or not less by being truthful, regardless of the value of $\pi_{t}$. When $j$ misreports its WTP, there are two cases: (\rmnum{1}) $v_{j}>c_{j}$ and (\rmnum{2}) $v_{j}<c_{j}$. First, we analyze the first case. (\rmnum{1}.a) If $\pi_{t}> v_{j}$, user $j$ is rejected with $o_{j}=0$ no matter whether it is truthful. (\rmnum{1}.b) If $\pi_{t}= v_{j}$, it may be accepted or not; we have $o_{j}=\hat{o}_{j}<0$ if accepted and $o_{j}=0$ otherwise. By being truthful, it is rejected with $o_{j}=0$. (\rmnum{1}.c) If $c_{j}<\pi_{t}<v_{j}$, it is accepted  with $o_{j}=\hat{o}_{j}<0$; by being truthful, it is rejected with $o_{j}=0$. (\rmnum{1}.d) If $c_{j}=\pi_{t}$, it is accepted with $o_{j}=0$; by being truthful, we also have $o_{j}=0$ no matter whether it is accepted. (\rmnum{1}.e) If $\pi_{t}<c_{j}$, it is accepted and gets the same payoff $o_{j}$ no matter whether it is truthful. Next, we analyze the second case in a similar way. (\rmnum{2}.a) If $\pi_{t}< v_{j}$, it is accepted and gets the same payoff no matter whether it is truthful. (\rmnum{2}.b) If $\pi_{t}= v_{j}$, it may be accepted or not with the payoff $o_{j}$ equaling $\hat{o}_{j}$ or 0; by being truthful, it is accepted with $o_{j}=\hat{o}_{j}>0$. (\rmnum{2}.c) If $v_{j}<\pi_{t}< c_{j}$, it is rejected with $o_{j}=0$; by being truthful, it is accepted with $o_{j}=\hat{o}_{j}>0$. (\rmnum{2}.d) If $\pi_{t}= c_{j}$, it is rejected with $o_{j}=0$; by being truthful, it may be accepted or not with $o_{j}=0$ in any case. (\rmnum{2}.e) If $\pi_{t}> c_{j}$, it is rejected with $o_{j}=0$ no matter whether it is truthful. Observing in the analysis above, we can conclude that the payoff of $j$ is maximal or not less by being truthful; thus, the proposition holds.
\end{proof}


\section{An Extended Framework}
\label{sec.extended-model}

In this section, we provide further insight into the basic framework introduced Sec.~\ref{sec.basic-resource-management} and Sec.~\ref{sec.optimal-spot-pricing}, and propose an extension attaining higher quality of service.




\subsection{Limitation to Usability}
\label{sec.motivation-further-design}

In the basic framework, the slot duration has two implications as indicated in Feature~\ref{feature-11} and Feature~\ref{feature-12}. First, since an on-demand job may arrive at any time point in the period of a slot and will get served at the beginning of the next slot, the delivery of computing service is delayed to some extent. In the worst case up to $k$ minutes are required before being allocated. This can harm the quality of on-demand service and
\begin{itemize}
\item from an on-demand user's perspective, it may hope that the slot duration is not large.
\end{itemize}
Second, upon acceptance of a spot job at a slot $t$, a server will be allocated to this job. The slot duration represents the guaranteed time that the server will dedicate to the job. Such a spot job risks being rejected at the next slot $t+1$ since the spot price may change. Furthermore, spot jobs newly accepted or spot jobs migrated to another server to prioritize new on-demand jobs face a process of migrating or loading VMIs, which takes $k^{\prime}$ minutes, e.g., $k^{\prime}$ approximates 3 under current technology. Ultimately, such spot jobs can effectively utilize servers for $k-k^{\prime}$ minutes, where $k>k^{\prime}$. So, we have
\begin{itemize}
\item from a spot user's perspective, it may hope that the slot duration is properly large.
\end{itemize}
In the basic framework, both the persistence of spot service and the immediacy of on-demand service depend on the slot duration; it is difficult to simultaneously satisfy the requirements of on-demand and spot users. In fact, in Amazon EC2, $k$ is set to 5; in this case, upon acceptance of a bid at $t$, $3$ minutes are wasted while $2$ minutes are effectively utilised, as illustrated in Fig.~\ref{Fig.extended-model-motivation-4}; however, an on-demand user may need to wait for up to 5 minutes to get served. Finally, we observe that the process of loading or migrating VMIs requires additional system resources (e.g., bandwidth) to be consumed. From a system administrator's perspective, the convenience of the spot-pricing scheme may be reduced if it generates a large number of such operations.


\begin{figure}[t]
\begin{center}
  \includegraphics[width=2.85in]{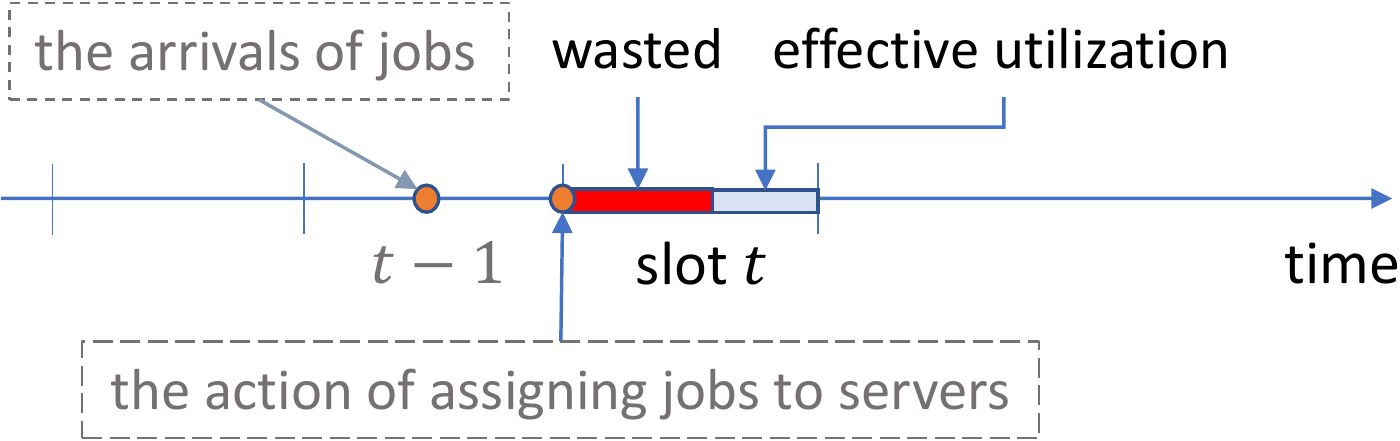}
  \caption{The limits of usability of the basic model: the length of the red rectangle is the time used for loading for loading/migrating VMIs, where no task can be executed.}\label{Fig.extended-model-motivation-4}
\end{center}
\end{figure}

\subsection{Our Improvement: Implementing Spot Pricing in Parallel}
\label{sec.spot-pricing-in-parallel}

In this subsection, we propose an extended framework to solve the tradeoff between the persistence of spot service and the immediacy of on-demand service; here, the basic framework of Sec.~\ref{sec.basic-resource-management} and Sec.~\ref{sec.optimal-spot-pricing} will be implemented in parallel on multiple groups of servers that alternatingly serve the jobs that arrive at different slots.

\vspace{0.7em}\noindent\textbf{Parallelized model.} There are a total of $m$ servers. In the basic framework, every job accepted at $t$ will be assigned to one of the $m$ servers where $t=1, 2, 3, \cdots$. Now, the servers are divided into $b$ groups, and the $i$-th group consists of $m_{i}$ servers where $\sum_{i=1}^{b}{m_{i}}=m$. For all on-demand and spot jobs accepted at any slot $t$, there exists a $i\in [1, b]$ such that all these jobs will be assigned to the servers of the $i$-th group; here, $t$ and $i$ satisfy the following relation:
\begin{align}\label{equa-relation}
h=\left\lceil \frac{t}{b} \right\rceil-1 \text{ and } i = t - h\cdot b,
\end{align}
where $b$ is a system parameter; for example, if $b=2$, the jobs arriving at slot $t=1, 3, 5, \cdots$ will be served by the first group of servers. In other words, for all $i\in [1, b]$, the $i$-th group is an independent processing unit that serves the on-demand and spot jobs that arrive at slot $t=h\cdot b+i$ (i.e., in the period of slot $t-1$) where $h=0, 1, 2, \cdots$; these jobs arrive every $b$ slots. 
Within any group, the schemes for processing jobs are similar to the ones in the basic framework, which will be elaborated later.

At the $i$-th group, the action of assigning jobs to servers only occurs at the beginning of slot $t=h\cdot b + i$ where $h = 0, 1, 2, \cdots$. The key observation is that only such actions will change the server states, and the server states of the $i$-th group keep constant for $b$ slots, i.e., $b\cdot k$ minutes. Now, we have two parameters $k$ and $b$ to control the jobs' waiting and service time. After the assignment of jobs at $t$, the spot jobs can utilize the assigned server for $b$ slots without being interrupted by the high priority of on-demand jobs that arrive at the subsequent $b-1$ slots; those latter on-demand jobs will be served by the other $b-1$ groups of servers. Such an extended framework is also illustrated in Fig.~\ref{Fig.extended-model-motivation-2} where $b=3$, and it has the following two features that quantify the immediacy of on-demand service and the persistence of spot service:

\begin{feature}\label{feature-21}
Upon arrival, on-demand users need to wait for up to $k$ minutes to get served.
\end{feature}

\begin{feature}\label{feature-22}
Once spot users bid successfully, it is guaranteed that they can persistently get served for $b\cdot k$ minutes.
\end{feature}

\begin{figure}[t]
\begin{center}
  \includegraphics[width=3.95in]{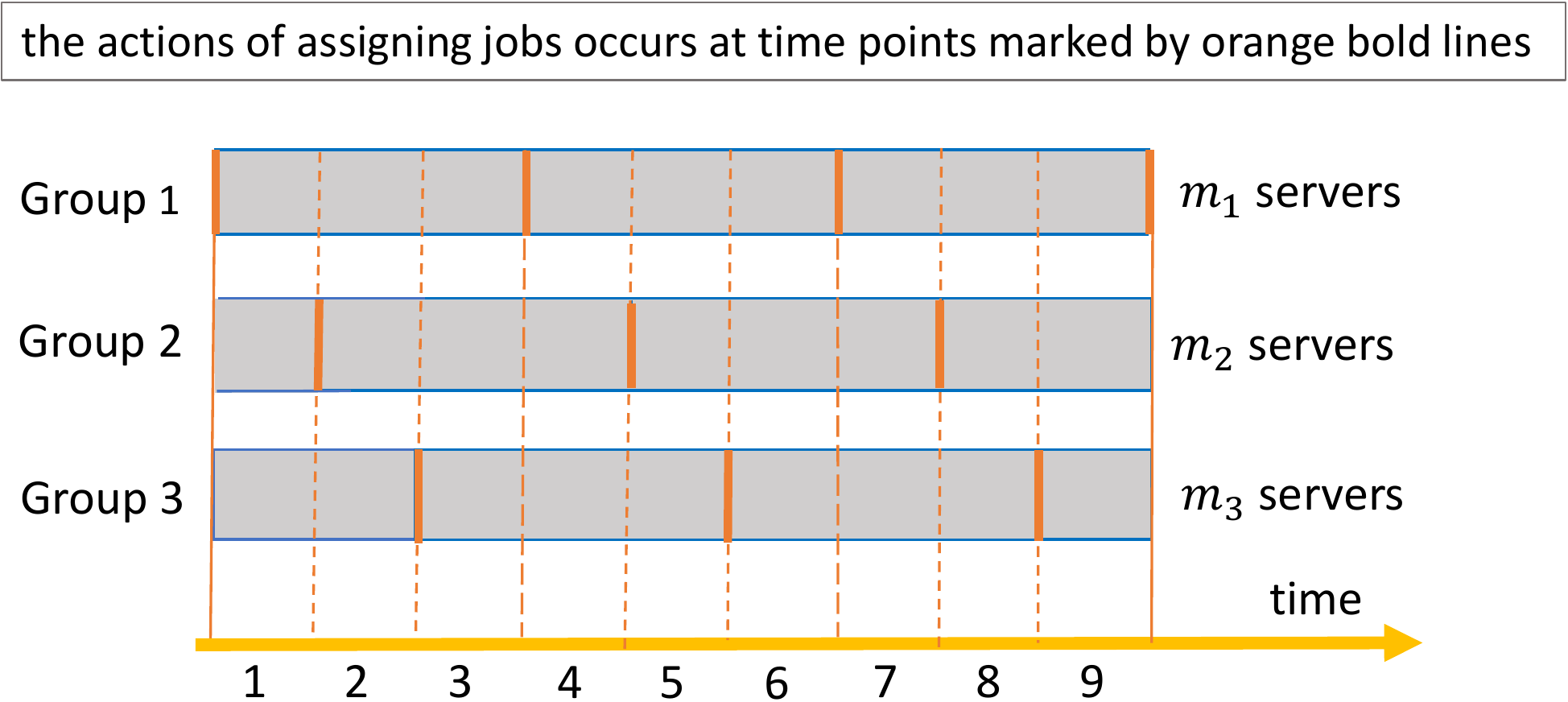}
  \caption{An extended framework: for all $i\in [1, 3]$, the $i$-th group serves the jobs arriving and accepted at slot $3\cdot h+i$ where $h=0, 1, 2, \cdots$.}\label{Fig.extended-model-motivation-2}
\end{center}
\end{figure}

\noindent\textbf{Parameter Setting.} In this framework, the parameters $k^{\prime}$, $b$, $k$, and $L$ are set to satisfy the following relations: (\rmnum{1}) $k^{\prime} < b\cdot k$, and (\rmnum{2}) $L$ is the multiple of $b$, i.e., $K=\frac{L}{b}$ is an integer.
The first relation guarantees for each accepted bid that the effective server utilization time is positive, i.e., $b\cdot k-k^{\prime}>0$; here, the value of $k$ can be smaller than $k^{\prime}$. Instead, to guarantee this, it is required that $k>k^{\prime}$ in the basic framework of Sec. \ref{sec.basic-resource-management} and Sec. \ref{sec.optimal-spot-pricing}.
Thus, in the extended framework, we can set the length of a slot to a small value, which can well guarantee the immediacy of on-demand service by Feature~\ref{feature-21}. For example, we set $k=1$ (minute); then, an on-demand job needs to wait for at most 1 minute to get served. On the other hand, we can set $b$ to a large value such that $b\cdot k>k^{\prime}$, which can well guarantee the persistence of spot service by Feature~\ref{feature-22}. For example, when $b=30$, the minimum time that a server dedicates to an accepted bid is up to 30 minutes. As a result, with the extended framework, we can well address the two seemingly conflicting requirements from spot and on-demand users above. We also illustrate in Fig.~\ref{Fig.extended-model-motivation-5} another case where $k=5$, $b=3$, and $k^{\prime}=3$, in contrast to Fig.~\ref{Fig.extended-model-motivation-4}; here, upon acceptance of a bid, 3 minutes are wasted and 12 minutes are effectively utilized.

\begin{figure}[t]
\begin{center}
  \includegraphics[width=2.95in]{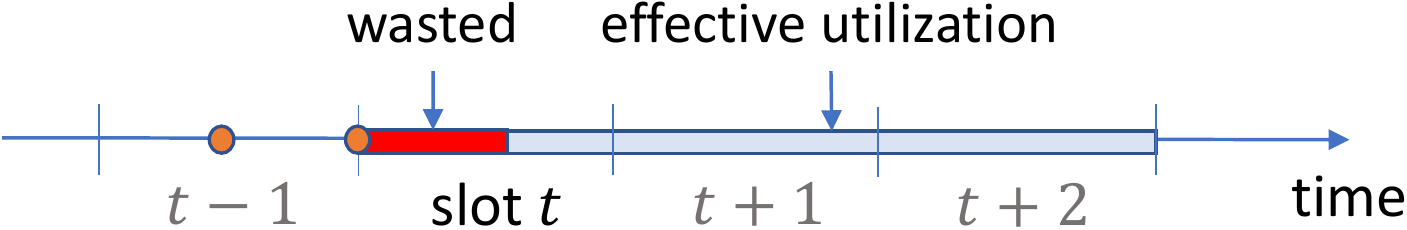}
  \caption{Improvement to Usability.}\label{Fig.extended-model-motivation-5}
\end{center}
\end{figure}

At the $i$-th group, on-demand jobs arrive at $t=h\cdot b+i$ where $h=0, 1, 2, \cdots$; they have higher priority to access servers and are first dispatched to the servers.
Recall that the reference time interval for billing an on-demand user is made of $L$ consecutive slots. Thus, the size $s_{j}$ of an on-demand job $j$ can be viewed as the multiple of $L$ (slots) and let $\tau_{j}=\frac{s_{j}}{L}$. After the assignment of $j$ at slot $t$, it will occupy a server for $s_{j}=b\cdot K\cdot \tau_{j}$ slots. The second relation guarantees that the number of slots occupied by $j$ is the multiple of $b$. Thus, for all $t^{\prime}=h^{\prime}\cdot b+i$ where $h^{\prime}=0, 1, 2, \cdots$, if the slot $t^{\prime}$ of a server is occupied by an on-demand job, the entire period of $[t^{\prime}, t^{\prime}+b-1]$ will be occupied by this job.


\vspace{0.35em}\noindent\textbf{Job assignment, acceptance and pricing.} For all $i\in [1, b]$, the $i$-th processing unit is a single system where the way of processing jobs is the same as the way of the basic framework, except that there are only $m_{i}$ servers available and jobs arrive every $b$ slots, i.e., at $t=h\cdot b+i$ where $h=0, 1, 2, \cdots$. In the following, we apply the notion in the basic framework to the scenario here and show the process of assigning and accepting jobs, as described in Section~\ref{sec.on-demand-jobs}, \ref{sec.pricing} and \ref{sec.dispatch} and illustrated in Fig.~\ref{Fig.framework-1}; then, we explain the process of determining the optimal spot price at $t$, as described in Section~\ref{sec.optimal-spot-pricing} and illustrated in Fig.~\ref{Fig.framework-2}.

After the assignment of on-demand jobs, the idle servers of the $i$-th unit in the period of $[t, t+b-1]$ are sold as spot instances where $t=h\cdot b+i$ and $h=0, 1, 2, \cdots$. We denote their amount by $M_{t}^{(i)}$, which is also the capacity of accepting bids at $t$. Spot users arrive at every such slot $t$ and bid prices to utilize spot instances.  All bids available at $t$ are denoted by $\mathcal{A}_{t}=\mathcal{J}_{i, t}^{\prime}\cup\mathcal{J}_{i, t}^{\prime\prime}$: $\mathcal{J}_{i, t}^{\prime}$ denotes the bids that belongs to the spot users whose bids have been accepted at $t-b$ and who continue bidding at $t$, and $\mathcal{J}_{i, t}^{\prime\prime}$ denotes the bids newly arriving and submitted at any time point in the period of slot $t-1$. Replacing the $\mathcal{J}_{t}^{\prime}$, $\mathcal{J}_{t}^{\prime\prime}$ and $M_{t}$ of Algorithm~\ref{procedure-accepting-bids} with $\mathcal{J}_{i, t}^{\prime}$, $\mathcal{J}_{i, t}^{\prime\prime}$ and $M_{t}^{(i)}$, we could get the procedure for accepting bids in this section.
As is given in (\ref{equa-accepted-bids}), the number of accepted bids at $t$ depends on the spot price $\pi_{t}$, the available bids $\mathcal{A}_{t}$ and the number of spot instance $M_{t}^{(i)}$ and is
\begin{align}\label{equa-accepted-bids-2}
N_{t}^{(i)} = \min\left\{M_{t}^{(i)},\, |F(\pi_{t},\, \mathcal{A}_{t})|\right\}.
\end{align}
Let $\hat{\mathcal{J}}_{i, t}^{\prime}$ denote the accepted bids in $\mathcal{J}_{i, t}^{\prime}$, and $\hat{\mathcal{J}}_{i, t}^{\prime\prime}$ denote the accepted bids in $\mathcal{J}_{i, t}^{\prime\prime}$. The procedure of assigning spot jobs to servers is the same as the procedure in Section~\ref{sec.dispatch} after (\rmnum{1}) we replace the accepted bids $\hat{\mathcal{J}}_{t}^{\prime}$ and $\hat{\mathcal{J}}_{t}^{\prime\prime}$ with $\hat{\mathcal{J}}_{i,t}^{\prime}$ and $\hat{\mathcal{J}}_{i,t}^{\prime\prime}$, and (\rmnum{2}) we replace the period of slot $t-1$ in the Step 2 with the period of $[t-b, t-1]$. Once an instance is offered to a spot job, the instance will dedicate $b$ time slots to this job.

For the accepted bids where the operation of loading or migrating VMIs is needed, their amount is denoted by $f_{t}^{(i)}$, which is observable. Recall that $\beta=\frac{k^{\prime}}{k}$, and there are $f_{t}^{(i)}$ accepted bids whose effective server utilization time is $b\cdot k-k^{\prime}$ minutes, i.e., $b-\beta$ slots; for the other accepted bids, the effective utilization time is $b$ slots and their amount is $N_{t}^{(i)}-f_{t}^{(i)}$. As stated in Definition~\ref{def-billing-rule}, the instances will be charged for the period in which they are effectively utilized for executing workload, excluding the period in which VMIs are loaded or migrated; the price of effectively utilizing an instance for a slot is $\frac{\pi_{t}}{L}$. Similar to (\ref{equa-gain}), the revenue from the spot market at $t$ is as follows:
\begin{equation}\label{equa-gain-1}
\mathcal{G}^{(i)}(t)  = \left(N_{t}^{(i)}-f_{t}^{(i)}\right)\cdot \frac{\pi_{t}}{K} + f_{t}^{(i)}\cdot \left(1-\frac{\beta}{b}\right)\cdot \frac{\pi_{t}}{K}
                      = \left( N_{t}^{(i)}  - \frac{\beta}{b}\cdot f_{t}^{(i)}  \right)\cdot \frac{\pi_{t}}{K},
\end{equation}
where $K=\frac{L}{b}$. With (\ref{equa-accepted-bids-2}), $\mathcal{G}^{(i)}(t)$ can be transformed as a function of the single variable $\pi_{t}$:
\begin{align}\label{equa-gain-2}
\mathcal{G}^{(i)}(t) = \check{\mathcal{G}}\left(\pi_{t},\, M_{t}^{(i)},\, \mathcal{A}_{t},\, f_{t}^{(i)}\right),
\end{align}
where $M_{t}^{(i)}$, $\mathcal{A}_{t}$ and $f_{t}^{(i)}$ are observable at slot $t$. As we conclude in Lemma~\ref{lemma-optimal-price} and Proposition~\ref{proposi-optimal-price}, the optimal spot price at $t$ is in $\mathcal{V}_{t}$ and is such that
\begin{equation}\label{optimal-pricing-decision-1}
\pi_{t}^{*}\leftarrow \arg\max\limits_{\pi_{t}\in\mathcal{V}_{t}}{\check{\mathcal{G}}\left(\pi_{t},\, M_{t}^{(i)},\, \mathcal{A}_{t},\, f_{t}^{(i)}\right)};
\end{equation}
the corresponding procedure is presented by Algorithm~\ref{optimal-pricing}, i.e., SpotiPrice$\left(M_{t}^{(i)}, \mathcal{A}_{t}, f_{t}^{(i)}, K, \beta/b\right)$.

\vspace{0.12em}\noindent\textbf{Reference Performance Metric.} After giving the framework for running on-demand and spot services, a further objective of this paper is evaluating its performance. In the extended framework, the states of all servers still keep constant in the period of every slot; for all $l\in[1, b]$, we denote by $\overline{M}_{t}^{(l)}$ the number of servers of the $l$-th processing unit used as on-demand instances at slot $t$. Let $\overline{M}_{t}=\sum_{l=1}^{b}{\overline{M}_{t}^{(l)}}$, denoting the total number of instances that are occupied by on-demand jobs at $t$. Recall that the price of utilizing an on-demand instance for $L$ slots is $p$, and at slot $t$ the revenue from the on-demand market is
\begin{align}\label{equa-on-demand-revenue-1}
\mathcal{G}_{t}^{o}=\frac{p}{L}\cdot\overline{M}_{t}.
\end{align}
In contrast, the revenue from the spot market at $t$ is $\mathcal{G}^{(i)}(t)$ given in (\ref{equa-gain-1}). The revenue improvement brought by the spot market is measured by the following ratio:
\begin{equation}\label{extended-efficiency}
\alpha_{t} = \frac{\mathcal{G}^{(i)}(t)}{\mathcal{G}_{t}^{o}}.
\end{equation}
The ratio $\alpha_{t}$ represents how much the CSP's revenue could be improved by at slot $t$ after complementing the on-demand market with a spot market. The revenue improvement $\alpha_{t}$ is a main performance metric of this paper.

\subsection{An Efficiency Analysis}
\label{sec.efficiency-analysis}

The extended framework of this section allows us to use some assumptions to derive an analytical result. The aim is to characterize the revenue improvement $\alpha_{t}$ with a simple mathematical expression; it helps us clearly understand which factors are affecting the revenue improvement when the on-demand market is complemented with a spot market via the framework proposed in this paper. The two assumptions are: (\rmnum{1}) the product $b\cdot k$ is set to a large enough value, and (\rmnum{2}) at every slot $t$, the prices of all bids $\mathcal{A}_{t}$ follow a uniform distribution over $[\underline{\pi}, \overline{\pi}]$, as is used for cloud services in \cite{Zheng15}; here, $\overline{\pi}$ and $\underline{\pi}$ are the maximum and minimum bid prices of users, and $\overline{\pi}$ can be viewed as the on-demand price $p$, i.e., $\overline{\pi}=p$. We emphasize that our framework itself do not rely on such assumptions in order to run on-demand and spot services.

Under the first assumption we have $\frac{k^{\prime}}{b\cdot k}=\frac{\beta}{b} \rightarrow 0$ where $k^{\prime}$ is a fixed parameter; furthermore, $f_{t}^{(i)}/N_{t}^{(i)}\leq 1$. Thus, by (\ref{equa-gain-1}), the revenue from the spot market at every slot $t$ can be approximated as
\begin{align}\label{equa-spot-revenue-1}
\mathcal{G}^{(i)}(t) = N_{t}^{(i)}\cdot \frac{\pi_{t}}{K}.
\end{align}
Under the second assumption we have that the expected number of bids whose prices are not below $\pi_{t}$ is $\left|F(\pi_{t},\, \mathcal{A}_{t})\right|=A_{t}\cdot (\overline{\pi}-\pi_{t})/(\overline{\pi}-\underline{\pi})$. The number of accepted bids given in (\ref{equa-accepted-bids-2}) can be transformed as
\begin{equation}\label{equa-accepted-bids-1}
N_{t}^{(i)} = \left|F(\pi_{t},\, \mathcal{A}_{t})\right| = A_{t}\cdot \frac{\overline{\pi}-\pi_{t}}{\overline{\pi}-\underline{\pi}},
\end{equation}
subject to the constraint that $N_{t}^{(i)}\leq M_{t}^{(i)}$. Let
\begin{align}\label{equa-pi}
\pi_{t}^{\prime} = \overline{\pi} - \frac{M_{t}^{(i)}}{A_{t}}\cdot\left(\overline{\pi}-\underline{\pi}\right)\text{, and  }\pi_{t}^{\prime\prime} =\max\left\{\pi_{t}^{\prime}, \underline{\pi}\right\}.
\end{align}
With (\ref{equa-accepted-bids-1}), the constraint translates to $\pi_{t}\geq \pi_{t}^{\prime}$. Since $\pi_{t}\in [\underline{\pi}, \overline{\pi}]$, the spot price at $t$ should satisfy
\begin{align}\label{equa-accepted-bids-constraints}
\pi_{t}\geq \pi_{t}^{\prime\prime}.
\end{align}
Finally, at any slot $t=h\cdot b+i$, we have from (\ref{equa-spot-revenue-1}) and (\ref{equa-accepted-bids-1}) that, the CSP's revenue from the spot market is as follows
\begin{align}\label{equa-gain-3}
\mathcal{G}^{(i)}(t) = \frac{A_{t}}{K\cdot \left(\overline{\pi}-\underline{\pi}\right)}  \cdot \left(\overline{\pi}\cdot\pi_{t}-\pi_{t}^{2}\right).
\end{align}
At $t$, $A_{t}$ is observable; $\mathcal{G}^{(i)}(t)$ is a quadratic function of $\pi_{t}$ subject to (\ref{equa-accepted-bids-constraints}). We let $\rho=\underline{\pi}/\overline{\pi}$ and have that
\begin{proposition}\label{theorem-performance-analysis}
At every slot $t$, the optimal spot price $\pi_{t}^{*}$ and the maximum revenue from the spot market are as follows:
  \begin{align}\label{extended-gain-2}
  \mathcal{G}^{(i)}(t) =
\begin{dcases}
\frac{\overline{\pi}}{4\cdot K}\cdot\frac{A_{t}}{1-\rho} &  \text{if } \rho\leq \min\left\{0.5, 1-D/2\right\}, \text{where } \pi_{t}^{*}=\overline{\pi}/2\\
 A_{t}\cdot \frac{\underline{\pi}}{K}  &  \text{if } D \leq 1 \text{ and } \rho > 0.5, \text{where } \pi_{t}^{*}=\underline{\pi}\\
 \frac{\overline{\pi}}{K}\cdot ( 1-\frac{1-\rho}{D} )\cdot M_{t}^{(i)}  &  \text{if } D > 1 \text{ and } \rho > 1-D/2, \text{where } \pi_{t}^{*}=\pi_{t}^{\prime}.
\end{dcases}
  \end{align}
where $\pi_{t}^{\prime}$ is given in (\ref{equa-pi}), $D=\frac{A_{t}}{M_{t}^{(i)}}$, and $\rho\in (0,1)$.
\end{proposition}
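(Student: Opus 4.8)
The plan is to treat the optimization of the spot revenue as a one-dimensional constrained concave maximization and read off the three cases from the geometry of a parabola. First I would observe that the revenue in (\ref{equa-gain-3}) is, up to the positive constant $\frac{A_{t}}{K(\overline{\pi}-\underline{\pi})}$, the quadratic $g(\pi_{t})=\overline{\pi}\pi_{t}-\pi_{t}^{2}$, a downward-opening parabola whose unconstrained maximizer is the vertex $\pi_{t}=\overline{\pi}/2$ with maximum value $\overline{\pi}^{2}/4$. The admissible prices are those satisfying both $\pi_{t}\in[\underline{\pi},\overline{\pi}]$ and the capacity constraint (\ref{equa-accepted-bids-constraints}), so the feasible set is the interval $[\pi_{t}^{\prime\prime},\overline{\pi}]$ with $\pi_{t}^{\prime\prime}=\max\{\pi_{t}^{\prime},\underline{\pi}\}$. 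Since $\overline{\pi}/2\le\overline{\pi}$ always holds, the upper endpoint never carries the optimum, so the whole argument reduces to locating $\overline{\pi}/2$ relative to the lower endpoint $\pi_{t}^{\prime\prime}$.

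Next I would split into the two regimes that determine $\pi_{t}^{\prime\prime}$. A short computation from (\ref{equa-pi}), using $M_{t}^{(i)}/A_{t}=1/D$, shows $\pi_{t}^{\prime}\le\underline{\pi}\iff D\le 1$; hence $\pi_{t}^{\prime\prime}=\underline{\pi}$ when $D\le 1$ and $\pi_{t}^{\prime\prime}=\pi_{t}^{\prime}$ when $D>1$. Within each regime the optimizer is either the vertex $\overline{\pi}/2$ (when it is feasible, i.e.\ $\overline{\pi}/2\ge\pi_{t}^{\prime\prime}$) or, because $g$ is strictly decreasing to the right of the vertex, the left endpoint $\pi_{t}^{\prime\prime}$ (when $\overline{\pi}/2<\pi_{t}^{\prime\prime}$). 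Translating the feasibility condition $\overline{\pi}/2\ge\pi_{t}^{\prime\prime}$ into the parameters $\rho=\underline{\pi}/\overline{\pi}$ and $D$ yields, after dividing through by $\overline{\pi}$, the threshold $\rho\le 1/2$ when $D\le 1$ and $\rho\le 1-D/2$ when $D>1$; these merge into the single condition $\rho\le\min\{0.5,\,1-D/2\}$ labelling the first case.

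I would then evaluate $\mathcal{G}^{(i)}(t)$ at the chosen $\pi_{t}^{*}$ in each branch and simplify. At $\pi_{t}^{*}=\overline{\pi}/2$ the parabola attains $\overline{\pi}^{2}/4$, and using $\overline{\pi}-\underline{\pi}=\overline{\pi}(1-\rho)$ gives $\frac{\overline{\pi}}{4K}\cdot\frac{A_{t}}{1-\rho}$, the first case. When $D\le 1$ and $\rho>1/2$ the optimum is $\pi_{t}^{*}=\underline{\pi}$, and the factor $\overline{\pi}-\underline{\pi}$ cancels to leave exactly $A_{t}\underline{\pi}/K$, the second case. When $D>1$ and $\rho>1-D/2$ the optimum is $\pi_{t}^{*}=\pi_{t}^{\prime}$; here I would use the identity $\overline{\pi}-\pi_{t}^{\prime}=\frac{1}{D}(\overline{\pi}-\underline{\pi})$ from (\ref{equa-pi}) so that $\pi_{t}^{\prime}(\overline{\pi}-\pi_{t}^{\prime})$ cancels one factor of $\overline{\pi}-\underline{\pi}$, and together with $A_{t}/D=M_{t}^{(i)}$ and $\pi_{t}^{\prime}=\overline{\pi}(1-\frac{1-\rho}{D})$ this reduces to $\frac{\overline{\pi}}{K}(1-\frac{1-\rho}{D})M_{t}^{(i)}$, the third case.

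The individual steps are elementary; the point needing genuine care is verifying that the three stated conditions partition the admissible region $\{(\rho,D):\rho\in(0,1),\,D>0\}$ and agree on their boundaries, so that the piecewise formula (\ref{extended-gain-2}) is well defined. I expect the main (still modest) obstacle to be this bookkeeping of the case boundaries, in particular confirming that $\min\{0.5,\,1-D/2\}$ correctly stitches together the two feasibility thresholds across the $D\le 1$ and $D>1$ regimes, and that the degenerate possibility $1-D/2<0$ for large $D$ (which rules out the first case and forces the third) remains consistent.
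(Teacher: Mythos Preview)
Your proposal is correct and follows essentially the same approach as the paper: both treat (\ref{equa-gain-3}) as a downward parabola with vertex $\overline{\pi}/2$ on the feasible interval $[\pi_{t}^{\prime\prime},\overline{\pi}]$, identify the optimizer as either the vertex or the left endpoint $\pi_{t}^{\prime\prime}$, and translate the resulting inequalities into conditions on $\rho$ and $D$. The only cosmetic difference is the order of the case split---the paper first asks whether the vertex is feasible and then determines which of $\underline{\pi},\pi_{t}^{\prime}$ equals $\pi_{t}^{\prime\prime}$, whereas you first fix $\pi_{t}^{\prime\prime}$ via $D\lessgtr 1$ and then check vertex feasibility---but the computations and the final partition coincide.
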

\begin{proof}
For the quadratic function $\mathcal{G}^{(i)}(t)$, the axis of symmetry is a vertical line $x=\frac{\overline{\pi}}{2}$; its maximum value is achieved at (\rmnum{1}) $\pi_{t}=\frac{\overline{\pi}}{2}$ if $\frac{\overline{\pi}}{2} \in [\pi_{t}^{\prime\prime}, \overline{\pi}]$, and at (\rmnum{2}) $\pi_{t}=\pi_{t}^{\prime\prime}$ if $\frac{\overline{\pi}}{2} < \pi_{t}^{\prime\prime}$. In each case, such $\pi_{t}$ is the optimal spot price. In the latter case, there are two subcases: (\rmnum{2}.a) if $\underline{\pi} \geq \pi_{t}^{\prime}$, the optimal spot price $\pi_{t}^{*}$ is $\underline{\pi}$; (\rmnum{2}.b) if $\underline{\pi} < \pi_{t}^{\prime}$, $\pi_{t}^{*}=\pi_{t}^{\prime}$. In the case (\rmnum{1}), the condition $\frac{\overline{\pi}}{2} \in [\pi_{t}^{\prime\prime}, \overline{\pi}]$ is equivalent to the condition $\frac{\overline{\pi}}{2} \geq \pi_{t}^{\prime\prime}$, which requires both $\frac{\overline{\pi}}{2} \geq \pi_{t}^{\prime}$ and $\frac{\overline{\pi}}{2} \geq \underline{\pi}$ by (\ref{equa-pi}); due to $\rho=\underline{\pi}/\overline{\pi}$, this condition is equivalent to the condition $\rho\leq \min\{0.5,\, 1-D/2\}$. In the case (\rmnum{2}), the condition is equivalent to $\rho > \min\{0.5,\, 1-D/2\}$. The condition $\underline{\pi} \geq \pi_{t}^{\prime}$ in the case (\rmnum{2}.a) is equivalent to $D\leq 1$; thus, the conditions to make $\pi_{t}^{*}=\underline{\pi}$ are $\rho > \min\{0.5,\, 1-D/2\}$ and $D\leq 1$, which are further equivalent to $\rho > \frac{1}{2}$ and $D\leq 1$. Similarly, in the case (\rmnum{2}.b), the conditions to make $\pi_{t}^{*}=\pi_{t}^{\prime}$ are $\rho > 1-D/2$ and $D > 1$. Finally, substituting the optimal spot price $\pi_{t}^{*}$ in each case into (\ref{equa-gain-3}), we could get the maximum revenue $\mathcal{G}^{(i)}(t)$ in (\ref{extended-gain-2}).
\end{proof}

Finally, we can quantify the revenue improvement $\alpha_{t}$ and the conclusion below follows directly from (\ref{extended-efficiency}) and Proposition~\ref{theorem-performance-analysis}.

\begin{corollary}\label{corollary-improvement}
  The revenue improvement brought by the spot market is as follows:
  \begin{align}\label{extended-efficiency-1}
  \alpha_{t} =
\begin{dcases}
\frac{1}{4}\cdot \frac{1}{1-\rho}\cdot D\cdot I  &  \text{if } \rho\leq \min\left\{0.5, 1-D/2\right\},  \text{where } \pi_{t}^{*}=\overline{\pi}/2\\
\rho \cdot D\cdot  I  &  \text{if } D \leq 1 \text{ and } \rho > 0.5, \text{where } \pi_{t}^{*}=\underline{\pi}\\
 ( 1-\frac{1-\rho}{D} )\cdot I  &  \text{if } D > 1 \text{ and } \rho > 1-D/2, \text{where } \pi_{t}^{*}=\pi_{t}^{\prime}
\end{dcases}
  \end{align}
  where $\pi_{t}^{\prime}$ is given in (\ref{equa-pi}), $D=\frac{A_{t}}{M_{t}^{(i)}}$, $I=\frac{M_{t}^{(i)}}{\overline{M}_{t}/b}$, and $\rho=\underline{\pi}/\overline{\pi}\in (0, 1)$.
\end{corollary}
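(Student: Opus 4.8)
The plan is to obtain $\alpha_{t}$ by directly dividing the spot-market revenue $\mathcal{G}^{(i)}(t)$ of Proposition~\ref{theorem-performance-analysis} by the on-demand revenue $\mathcal{G}_{t}^{o}$ of (\ref{equa-on-demand-revenue-1}), handling the three cases separately. First I would recall from (\ref{extended-efficiency}) that $\alpha_{t}=\mathcal{G}^{(i)}(t)/\mathcal{G}_{t}^{o}$, and substitute $\mathcal{G}_{t}^{o}=\frac{p}{L}\cdot\overline{M}_{t}=\frac{\overline{\pi}}{L}\cdot\overline{M}_{t}$, using the identification $\overline{\pi}=p$ introduced with the second assumption of Section~\ref{sec.efficiency-analysis}. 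Since the conditions that partition the three cases in (\ref{extended-gain-2}) depend only on $\rho$ and $D$, they carry over verbatim to (\ref{extended-efficiency-1}); hence the only work is to simplify the ratio of revenues within each case.

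The key algebraic identities I would isolate before simplifying are $\frac{L}{K}=b$ (since $K=\frac{L}{b}$), together with the definitions $D=\frac{A_{t}}{M_{t}^{(i)}}$ and $I=\frac{M_{t}^{(i)}}{\overline{M}_{t}/b}=b\cdot\frac{M_{t}^{(i)}}{\overline{M}_{t}}$. From these I read off the two combined quantities that appear after cancellation, namely $D\cdot I=\frac{b\cdot A_{t}}{\overline{M}_{t}}$ and $\frac{L}{K}\cdot\frac{M_{t}^{(i)}}{\overline{M}_{t}}=I$. These are exactly the groupings needed so that, once $\overline{\pi}$ and $K$ versus $L$ are cleared, the final expressions are written purely in terms of $\rho$, $D$, and $I$.

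Then I would treat the three cases in turn. In the first case, dividing $\frac{\overline{\pi}}{4K}\cdot\frac{A_{t}}{1-\rho}$ by $\frac{\overline{\pi}}{L}\cdot\overline{M}_{t}$ cancels $\overline{\pi}$ and leaves $\frac{1}{4(1-\rho)}\cdot\frac{L}{K}\cdot\frac{A_{t}}{\overline{M}_{t}}=\frac{1}{4}\cdot\frac{1}{1-\rho}\cdot D\cdot I$. In the second case, dividing $A_{t}\cdot\frac{\underline{\pi}}{K}$ by the same denominator yields $\frac{\underline{\pi}}{\overline{\pi}}\cdot\frac{L}{K}\cdot\frac{A_{t}}{\overline{M}_{t}}=\rho\cdot D\cdot I$, using $\rho=\underline{\pi}/\overline{\pi}$. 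In the third case, dividing $\frac{\overline{\pi}}{K}\cdot\left(1-\frac{1-\rho}{D}\right)\cdot M_{t}^{(i)}$ by the denominator gives $\left(1-\frac{1-\rho}{D}\right)\cdot\frac{L}{K}\cdot\frac{M_{t}^{(i)}}{\overline{M}_{t}}=\left(1-\frac{1-\rho}{D}\right)\cdot I$. Collecting the three expressions reproduces (\ref{extended-efficiency-1}) exactly.

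I do not expect a genuine obstacle here: the statement follows from the two identities above and a case-by-case substitution, so it is essentially a corollary in the literal sense. The only care required is bookkeeping, namely tracking the factor $\frac{L}{K}=b$ through each cancellation and confirming that the division by $\mathcal{G}_{t}^{o}$ leaves the case conditions on $\rho$ and $D$ untouched, so that the partition of (\ref{extended-gain-2}) transfers unchanged to (\ref{extended-efficiency-1}).
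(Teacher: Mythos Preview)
Your proposal is correct and follows exactly the paper's approach: the paper states only that the corollary ``follows directly from (\ref{extended-efficiency}) and Proposition~\ref{theorem-performance-analysis}'', and your case-by-case division of $\mathcal{G}^{(i)}(t)$ by $\mathcal{G}_{t}^{o}$, using $L/K=b$ and the identifications $D\cdot I=b\,A_{t}/\overline{M}_{t}$ and $I=b\,M_{t}^{(i)}/\overline{M}_{t}$, is precisely the bookkeeping that makes that sentence explicit.
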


Now, we explain the physical meaning of Corollary~\ref{corollary-improvement}. Recall that $M_{t}^{(i)}$ is the number of spot instances available to serve the bids at $t$, and they are idle instances in the on-demand market; $M_{t}^{(i)}$ determines the capacity of accepting bids. $A_{t}$ is the total number of bids at $t$. $\overline{M}_{t}$ is the number of instances executing on-demand jobs at $t$. The ratio $D=A_{t}/M_{t}^{(i)}$ can be viewed as {\em the saturation degree} of spot market at $t$, e.g., when it is larger than 1, the spot market is fully saturated with bids and not all bids could be accepted with the capacity constraint; when the ratio is zero, there are no bids at $t$. The ratio $\rho$ is {\em the value density} of user's bids. If $\rho$ is small, the price difference of users' bids is large. Let us consider a scenario where the maximum bid price of users $\overline{\pi}$ is given and a fixed number of bids with the highest prices are accepted: if $\rho$ is small, the lowest price of the accepted bids would be small; as a result, the spot price at $t$ is also small, as well as the revenue that the CSP gains from the bids. Similarly, we can have opposite conclusions for the case of a large $\rho$. In the long run, the mean of $\overline{M}_{t}/b$ approximates the mean of $\overline{M}_{t}^{(i)}$, and the ratio $I=\frac{M_{t}^{(i)}}{\overline{M}_{t}/b}$ could be roughly viewed as {\em the vacancy-to-utilization ratio} of on-demand market, representing the percentage of servers in idle states at slot $t$. If the vacancy-to-utilization ratio is large and the CSP does not offer spot service, only a small part of on-demand instances are effectively utilized by on-demand jobs and most of them are in idle states at $t$; for example, if the ratio is 7, it implies that 87.5\% of the instances will be in idle states.

By (\ref{extended-efficiency-1}) and (\ref{extended-efficiency}), the vacancy-to-utilization ratio of on-demand market $I$, the saturation degree of spot market $D$, and the value density $\rho$ together determine the revenue improvement $\alpha_{t}$ after complementing the on-demand market with a spot market. We have by (\ref{extended-efficiency-1}) that, the larger the value density $\rho$, the larger the revenue improvement. For example, in the case that $\rho$ is large and the saturation degree of spot market is low (i.e., $D \leq 1$ and $\rho > 0.5$), the best strategy is accepting all bids $\mathcal{A}_{t}$ at $t$, achieving the maximum $\alpha_{t}$. In the following, we consider another setting where the value density is small (i.e., $\rho\leq 0.5$). Suppose that the value density $\rho$ is 0.2. The vacancy-to-utilization ratio of on-demand market is fixed and is mainly determined by the QoS guarantee offered to the arriving on-demand jobs; since the jobs require a quick response from the CSP, its value is usually small, and we set $I$ to 7. Then, we have
  \begin{align}\label{extended-efficiency-example}
  \alpha_{t} =
\begin{dcases}
 2.1875\cdot D = 2.1875\cdot \left(A_{t}/M_{t}^{(i)}\right) & \text{ if } D \leq 1.6, \\
 7\cdot\left( 1-0.8\cdot\frac{1}{D} \right) = 7\cdot\left( 1-0.8\cdot \left(M_{t}^{(i)}/A_{t}\right) \right) & \text{ if }   D > 1.6.
\end{dcases}
  \end{align}
Here, if the saturation degree of spot market is large (i.e., $D=A_{t}/M_{t}^{(i)} > 1.6$), we have that after complementing the on-demand market with a spot market, the revenue improvement $\alpha_{t}$ is no smaller than $3.5$, representing at least 3.5-fold increase in the CSP's revenue. If the saturation degree is small (i.e., $A_{t}/M_{t}^{(i)} \leq 1.6$), the revenue improvement $\alpha_{t}$ is $2.1875$ times the saturation degree of spot market.

Finally, we observe that our derivation and observation above can be in principle generalised by relaxing assumption (\rmnum{2}) and letting $H(\pi_t)$ be a general probability distribution for the bid prices at every slot $t$ over the support $[\underline{\pi}, \overline{\pi}]$; then, the Equation (\ref{equa-accepted-bids-1}) becomes $N_{t}^{(i)} = A_{t}\cdot (1-H(\pi_{t}))$ and the revenue of spot market in (\ref{equa-gain-3}) becomes $G^{(i)}(t)=A_{t}\cdot (1-H(\pi_{t}))\cdot \pi_{t}/K$; the maximum revenue might be achieved when the spot price $\pi_{t}$ is such that the differential of $G^{(i)}(t)$ equals zero.
Under any distribution, one may observe that the revenue improvement $\alpha_{t}$ may mainly depend on the vacancy-to-utilization ratio of on-demand market $I$, the saturation degree of spot market $D$, and the value density $\rho$. For example, given the amount of servers, a larger $I$ means that more servers are idle for accepting bids; given the amount of spot instances and the distribution of bid prices, a larger $D$ means that more bids are available and the CSP can choose to accept the bids with higher prices; in both these cases, a higher revenue improvement may be achieved.



\section{Performance Evaluation}
\label{sec.performance-evaluation}

In this section we provide numerical validation for the proposed framework.

\subsection{Experimental Setting}
\label{sec.basic-experimental-setting}

Time is divided into consecutive slots and each slot contains $k=5$ minutes. On-demand instances are charged on an hourly basis and an hour contains $L=12$ slots. We set $b$ to 6, i.e., all servers are divided into 6 groups (also called processing units); for all $i\in [1, 6]$, the $i$-th group has $m_{i}$ servers and its value will be given in Section~\ref{sec.on-demand-instances-idle-empirical}. For all $i\in [1, 6]$, the $i$-th group is used to process the on-demand and spot jobs that arrive at slots $t=6\cdot h+i$ where $h=0, 1, 2, \cdots$.

\subsubsection{Components of On-demand and Spot Services}
\label{sec.components}

We have explained in Section~\ref{sec.framework-on-demand-spot-services} the basic framework; the final framework divides all servers into 6 groups.
For all $i\in [1, 6]$, at the beginning of slot $t=6\cdot h+i$ where $h=0, 1, 2, \cdots$, the job pricing, acceptance, and assignment of the $i$-th group are similar to the ones of the basic framework, and there are four main components:
\begin{description}

\item [1. Dispatching on-demand jobs.] We use the PTC policy to dispatch jobs to servers, as described in Section~\ref{sec.on-demand-jobs}.  On-demand users can be latency-critical and user-facing services have to meet strict tail-latency requirements at the 99th percentile of the distribution \cite{Delimitrou14a}; in other words, for every 100 jobs, there is at most one job that will miss its deadline. For the $i$-th processing unit, there are $m_{i}$ servers where $m_{i}$ is the minimum number of servers needed to guarantee the latency requirement and its value is given in Section~\ref{sec.on-demand-instances-idle-empirical}; if the number of servers is larger than $m_{i}$, more servers will be idle in the on-demand market and a higher revenue from the spot market may be achieved.

\item [2. Accepting spot jobs.] As explained in Section~\ref{sec.spot-pricing-in-parallel}, we apply Algorithm~\ref{procedure-accepting-bids} to the $i$-th group for determining which bids are accepted at slot $t$.

\item [3. Assigning spot jobs to servers.] As explained in Section~\ref{sec.spot-pricing-in-parallel}, we apply the procedure proposed in Section~\ref{sec.dispatch} to the $i$-th group for assigning spot jobs at $t$.

\item [4. Optimally pricing spot instances.] We use the algorithm SpotiPrice$\left(M_{t}^{(i)}, \mathcal{A}_{t}, f_{t}^{(i)}, K, \beta/b\right)$, described by Algorithm~\ref{optimal-pricing}, to determine the spot price $\pi_{t}$ at $t$. The time spent on loading or migrating VMIs is set to 3 minutes, i.e., $k^{\prime}=3$; here $\beta=\frac{k^{\prime}}{k}=0.6$. Finally, we use (\ref{equa-gain-1}) to determine the revenue from the spot market at $t$.
\end{description}

\subsubsection{Performance Metrics}
\label{sec.performance-metrics}

Recall that $\alpha_{t}$ is defined in (\ref{extended-efficiency}) and it is the ratio of the revenue of spot market to the revenue of on-demand market at a slot $t$. The {\em average revenue improvement} per slot is the average value of all $\alpha_{t}$ where $t=1, 2, 3, \cdots$ and it is denoted by $\alpha_{e}$. In our experiments, the main performance metric is $\alpha_{e}$ and it represents how much the CSP's revenue is increased by after complementing the on-demand market with a spot market.

Furthermore, for all $i\in [1, 6]$, the $i$-th processing unit is used to process the jobs arriving at slot $t=6\cdot h+i$ where $h=0, 1, 2, \cdots$. We assume that a super-slot contains 6 slots (i.e., 30 minutes). From the $i$-th slot on, the server state of the $i$-th unit changes every super-slot; the $h^{\prime}$-th super-slot corresponds to the period of $[t, t+5]$ where $h^{\prime}=1, 2, \cdots$ and $t=6\cdot (h^{\prime}-1)+i$. In our experiments, we also show the {\em average server utilization} per super-slot. In the case that the CSP offers both on-demand and spot instances (resp. on-demand instances alone), the utilization at a specific super-slot $h^{\prime}$ is defined as the ratio of the number of instances occupied by on-demand and spot jobs (resp. on-demand jobs) to the total number of instances available (i.e., $m_{i}$), denoted by $\theta_{t}^{(i)}$, where $h^{\prime}=(t-i)/6+1$.
In both cases, the average server utilization per super-slot is simply defined as the average value of $\theta_{1}^{(1)}, \theta_{2}^{(2)}, \cdots, \theta_{6}^{(6)}, \theta_{7}^{(1)}, \cdots, \theta_{12}^{(6)}, \theta_{13}^{(1)}, \cdots$, which is denoted by $\theta$ when only on-demand instances are offered and by $\mu$ when both instances are offered.
With the values of $\theta$ and $\mu$, we can numerically see the improvement to server utilization after complementing the on-demand market with a spot market.

As analyzed in Section~\ref{sec.efficiency-analysis}, the revenue improvement may mainly depend on three factors: (\rmnum{1}) the vacancy-to-utilization ratio of on-demand market, (\rmnum{2}) the saturation degree of spot market, and (\rmnum{3}) the distribution of users' bid prices. The first factor is mainly determined by the QoS guarantee offered to on-demand users and it is specified in Section~\ref{sec.components}; the related results are given in Section~\ref{sec.on-demand-instances-idle-empirical}. So, the environments of our experiments will vary in terms of the distribution of users' bid prices, and the arrival rate of bids; the main results will be given in Section~\ref{sec.basic-spot-pricing}.



\subsubsection{Arrival of on-demand jobs}
\label{sec.job-arrivals}

A public CSP such as Amazon EC2 serves many users of different sources; it is representative to use a heavy-tailed distribution to model the job size and a poisson distribution to model the job arrival \cite{Zhang18a,Chen11a}, which has been validated by some measurement study \cite{Zheng16a}. At every slot $t=1, 2, 3, \cdots$, the number of job arrivals follows a poisson distribution with a mean $\lambda_{o}$. A job's size is a random variable $x$ that follows a bounded pareto distribution with a scale parameter $x_{m}$ and a shape parameter $\alpha$; $x$ ranges in $[x_{m}, \overline{x}]$. Since on-demand instances are charged on an hourly basis, we further set the sizes of the jobs submitted to the CSP to $12\cdot\lceil \frac{x}{12} \rceil$. For a job $j$ with size $s_{j}$ and arrival time $a_{j}$, its deadline is $d_{j}=a_{j}+s_{j}-1$. The on-demand price is normalized as 1. In the whole system, there are 6 processing units and for all $i\in [1, 6]$, the $i$-th unit can be viewed as a single system used to process the jobs arriving at slot $t=6\cdot h+i$ where $h=0, 1, 2, \cdots$. Thus, each unit will process the jobs with the same statistical feature in terms of the arrival rate and the job size.
In the experiments, $\lambda_{o}$, $x_{m}$ and $\alpha$ are set to 60, 6, and $\frac{7}{6}$ respectively; then, the mean of that pareto distribution is 42 (slots); the lower and upper bounds of the job size are 0.5 and 13 hours. An exception occurs in Sec.~\ref{sec.on-demand-instances-idle-empirical} where we take different values for $\lambda_{o}$, $\alpha$, $\overline{x}$ to show their effect on the idleness of on-demand market.


\subsubsection{Arrival and departure of spot jobs}
\label{sec.job-arrivals-1}

At every slot $t$, there are some users that newly arrive and bid prices for spot instances; among these users, we assume that their bid prices follow some probability distribution.
At every $t$, the bids of these users vary in terms of the arrival rate, and their value distribution. In our simulations we consider two value distributions. The first is a uniform distribution over $[0.2, 1]$, following \cite{Zheng15}. The other is a bounded pareto distribution with a pareto index $\alpha=2$ and a scale parameter $x_{m}=0.3$, following \cite{Ben-Yehuda13a}; its lower and upper bounds are 0.3 and 1. Given a pareto distribution with $\alpha=2$ and $x_{m}=0.3$, its mean is $\frac{\alpha\cdot x_{m}}{\alpha-1}=0.6$, and the probability that a random variable $x$ takes on a value larger than $x_{0}$ is $(\frac{x_{m}}{x_{0}})^{\alpha}$, e.g., the probability that $x>0.6$ is 0.25. Thus, with the bounded pareto distribution, the proportion of the bids whose prices are small are larger.

All servers are divided into 6 groups; for all $i\in [1, 6]$, the $i$-th group serves the bids accepted at $t=6\cdot h+i$ where $h=0,1,2,\cdots$. The number $x_{t}^{(i)}$ of the bids that newly arrive at $t=6\cdot h+i$ follows a geometric distribution, similar to \cite{Zheng15}. Thus, $x_{t}^{(i)}$ is a random variable that denotes the number of failures before one success in a series of independent trials, where each trial results in either success or failure and the probability of success is the constant $q_{i}=1/\lceil \phi\cdot m_{i} \rceil$; the mean and variance of $x_{t}^{(i)}$ are $(1-q_{i})/q_{i}$ and $(1-q_{i})/q_{i}^{2}$. In our simulations, we consider three types of spot market that are respectively {\em fully}, {\em moderately}, and {\em poorly} saturated with bids; correspondingly, the parameter $\phi$ is set to 1, $\frac{1}{2.5}$, and $\frac{1}{5}$ and its value determines the number of spot jobs that newly arrive at $t$. Once the bid of a user is accepted, the assigned instance will dedicate $b=6$ slots (i.e., 30 minutes) to it. Among the users whose bids are accepted at $t=6\cdot h+i$, each will continue bidding at $t+6$ at a probability $1-\varrho$ and stop bidding at a probability $\varrho$; the value of $\varrho$ is chosen from a uniform distribution over $\{0.1, 0.3, 0.5\}$. The mean of $\varrho$ is 0.3; on average, at the beginning of each $t+6$, $30\%$ of the users accepted at $t$ will stop bidding while the remaining $70\%$ users will continue biding. We denote by $y_{t^{\prime}}^{(i)}$ the number of the spot users who are accepted at $t$ and will continue bidding at $t^{\prime}$ where $t^{\prime}=t+b=6\cdot (h+1)+i$. Thus, at the $i$-th group of servers, for all $t^{\prime}=6\cdot h+i$ where $h=1, 2, \cdots$, the total number of bids available is $x_{t^{\prime}}^{(i)}+y_{t^{\prime}}^{(i)}$; specially, for the initial slot $i$, there are only bids that newly arrive and the total number of bids is $x_{i}^{(i)}$.


\subsection{Idleness in On-demand Market}
\label{sec.on-demand-idleness}

In this subsection, we show the resource utilization when the CSP only provides on-demand instances.
This helps better perceive the advantage of selling the idle states of on-demand instances as spot instances, although such practice has been adopted by Amazon EC2. In particular, we will provide both experimental and theoretical results available.

\subsubsection{Empirical Results}
\label{sec.on-demand-instances-idle-empirical}

We implemented a queuing system to reproduce the utilization of servers.
The way of generating and dispatching on-demand jobs and the guaranteed quality of services (QoS) are described in Sec.~\ref{sec.job-arrivals} and Sec.~\ref{sec.components}.

We first look at the idleness of on-demand market when the expected job arrival rate $\lambda_{o}$, the shape parameter $\alpha$, and the upper bound of job size $\overline{x}$ take different values. In a pareto distribution, the larger the value of $\alpha$, the larger the expected job size; the latter is $\frac{\alpha\cdot x_{m}}{\alpha-1}$ for $\alpha>1$. The experiments are run over a period of about 120000 slots. While a server serves on-demand jobs, some super-slots are unoccupied; the idleness in a period is defined as the ratio of the amount of the unoccupied super-slots of all servers to the amount of the super-slots of all servers, denoted by $\vartheta$.
In the first case, we fix $\overline{x}=156$ and $\alpha=\frac{7}{6}$; the idleness $\vartheta$ is 0.8854, 0.8854, and 0.8856 when $\lambda_{o}$ equals 30, 60, and 90 respectively, which coincides with the theoretical result that the server utilization is independent of the job arrival rate given the QoS requirement \cite{Mitzenmacher01a}. In the second case, we fix $\lambda_{o}=60$ and $\overline{x}=156$, and the idleness $\vartheta$ is 0.8865, 0.8854, and 0.8769 when $\alpha$ equals $\frac{21}{20}$, $\frac{7}{6}$, and $\frac{7}{3}$. In the third case, we fix $\lambda_{o}=60$ and $\alpha=\frac{7}{6}$, and the idleness $\vartheta$ is 0.8854, 0.8877, and 0.8893 when $\overline{x}$ equals 156, 468, and 1404. 
The results of the latter two cases imply that the larger the job sizes, the higher the idleness of on-demand market. Under the different conditions above, we can observe that $\vartheta$ varies in a very small range of $[0.8769, 0.8893]$; the values of $\lambda_{o}$, $\alpha$, and $\overline{x}$ have slight effect on the idleness. Complementarily, the experimental results here are also consistent with a measurement study in \cite{Liu11a} where some instances of Amazon EC2 are launched and run for one week and the observed server utilization is in the 3\% to 17\% range.

For all $i\in [1, 6]$, recall the definition of $\theta_{t}^{(i)}$ in Section~\ref{sec.performance-metrics}. Let $\hat{\theta}_{t}^{(i)}=1-\theta_{t}^{(i)}$ and it denotes the percentage of servers idle and wasted at a super-slot of the $i$-th processing unit if the CSP does not sell them as spot instances. The average vacancy rate $\hat{\theta}^{(i)}$ of the $i$-th unit is defined as the average value of all $\hat{\theta}_{t}^{(i)}$ where $t=6\cdot h+i$ and $h=0, 1, 2, \cdots$. Now, we fix $\lambda_{o}=60$, $\alpha=\frac{7}{6}$, and $\overline{x}=156$ where the idleness is moderate; the values of $\hat{\theta}^{(1)}, \hat{\theta}^{(2)}, \cdots, \hat{\theta}^{(6)}$ are listed in Table~\ref{vacancy_rate}. For example, at the first unit, the percentage of servers idle at every super-slot is 88.56\% on average.

\begin{table}[t]
	\centering
		\caption{The vacancy rate of the six processing units when the CSP only offers on-demand instances}
		\label{vacancy_rate}
	\begin{threeparttable}[b]
		\begin{tabular}{|C{1.0cm}|C{1.0cm}|C{1.0cm}|C{1.0cm}|C{1.0cm}|C{1.0cm}|}	
			\hline
      $\hat{\theta}^{(1)}$   &  $\hat{\theta}^{(2)}$  &  $\hat{\theta}^{(3)}$  &   $\hat{\theta}^{(4)}$   &  $\hat{\theta}^{(5)}$  &  $\hat{\theta}^{(6)}$   \\ \hline	
      0.8856      &    0.8850     &    0.8857   &   0.8850  &  0.8853   &   0.8847   \\ \hline	
		\end{tabular}
	\end{threeparttable}
\end{table}


In the extended framework of Sec. 5, in order to have Feature~\ref{feature-21} and Feature~\ref{feature-22}, we divide servers into multiple groups that alternatingly serve the on-demand jobs arriving at different slots; however, in the basic framework of Sec. 3 and Sec. 4, servers are not divided: whenever an on-demand job arrives at any slot, one server will be chosen from all servers to serve it. Now, we show the effect of server division on the server utilization. Each experiment has the same job/workload input, guarantees the same QoS described in Sec.~\ref{sec.components}, and is taken respectively with and without server division. We will see that more servers are needed in the non-division scenario; thus, after division, the server utilization is improved and under the extended framework the servers are more effectively utilized by on-demand jobs. In particular, for the first case above, the minimum number of servers needed in the non-division scenario (resp. in the division scenario) is 6823, 13594, and 20421 (resp. 6105, 12198, and 18284); the corresponding utilization is improved by 11.76\%, 11.44\%, and 11.69\%. For the second case, the minimum number of servers needed in the non-division scenario (resp. in the division scenario) is 14589, 13594, and 8876 (resp. 13156, 12198, and 7450); the corresponding utilization is improved by 10.89\%, 11.44\%, and 19.14\%. For the third case, the minimum number of servers needed in the non-division scenario (resp. in the division scenario) is 13594, 15970, and 17837 (resp. 12198, 14452, and 16450); the corresponding utilization is improved by 11.44\%, 10.50\%, and 8.432\%.




\subsubsection{Theoretical Results}\label{sec.resource-management}

Results from discrete-time queuing theory can be used to help us perceive the relation between the mean waiting time of on-demand jobs and the utilization of servers; The standard definition for a job's waiting time is the queuing time from its arrival to the moment that it gets assigned. In particular, existing literature considers the case where the job arrival at a server follows a geometric distribution and the job size follows a general distribution. In the context of this paper, we can use the round-robin policy in Section~\ref{sec.on-demand-jobs} to uniformly dispatch the arriving on-demand jobs to servers; then, at every server there is a single queue and the mean waiting time of all on-demand jobs will be its counterpart at a server \cite{Zheng16a}.
We denote by $\sigma$ the job size's standard deviation and by $s$ the mean job size. At a server, the mean waiting time $w$ satisfies the following relation \cite{discrete-time-queue}:
\begin{align}\label{discrete-result}
w = \frac{\lambda\cdot (\sigma^{2}+s^{2}) - \rho}{2\cdot (1-\rho)}
\end{align}
where one job arrives at a slot with probability $\lambda$ and the probability that no jobs arrive is $1-\lambda$ where $\lambda \in [0,1]$; $\rho$ is the mean utilization or load of a server where $\rho=\lambda\cdot s$. By (\ref{discrete-result}), we also have
\begin{align}\label{discrete-result-1}
\frac{1}{\rho} = 1 + \frac{\sigma^{2}/s + s - 1}{2\cdot w}.
\end{align}
In the following, we illustrate the sensitivity of the server utilization $\rho$ to the waiting time $w$: the requirement of a small waiting time leads to a low utilization in the on-demand market. In this section, all servers are divided 6 groups and each group can be viewed as a single system where jobs arrive and are dispatched once every super-slot (i.e., 6 slots). To apply the relation (\ref{discrete-result-1}) directly, we use in this subsubsection the super-slot as the basic time unit for the job size. Since on-demand jobs are charged on an hourly basis, their size will be the multiple of 2 (super-slots); thus $s\geq 2$ and we have by (\ref{discrete-result-1}) that $\rho$ decreases as $w$ decreases. We assume that, the CSP will guarantee that the mean waiting time $w$ is $\frac{1}{6}$ super-slot. When the job size follows a uniform distribution over $\{2, 4, 6\}$, we have that the mean job size $w$ is $4$ and its variance $\sigma^{2}$ is small and equals 2; then, to guarantee the QoS, we have by (\ref{discrete-result-1}) that the server utilization is $\frac{1}{11.5}\approx 0.08696$. In other words, on average, the servers will be in idle state 91.30\% of the time. Thus, many servers are in idle states in the on-demand market, which remain to be utilized by spot users.

\subsection{Revenue Improvement of Spot Market}
\label{sec.basic-spot-pricing}

\begin{table}[t]
	\centering
		\caption{The revenue improvement when the bid prices follow a uniform distribution over $[0.2, 1]$}
		\label{efficiency_improvement-1-0}
	\begin{threeparttable}[b]
		\begin{tabular}{|C{1.5cm}|C{1.5cm}|C{1.5cm}|}	
			\hline
			            $\alpha_{e}^{(1)}$   &  $\alpha_{e}^{(2.5)}$  &  $\alpha_{e}^{(5)}$     \\ \hline			
			              4.615      &    2.894     &    1.641      \\ \hline	
		\end{tabular}
	\end{threeparttable}
\end{table}

\begin{table}[t]
	\centering
		\caption{The revenue improvement when the bid prices follow a bounded pareto distribution in $[0.3, 1]$ with a pareto index of 2.}
		\label{efficiency_improvement-2-0}	
    \begin{threeparttable}[b]
		\begin{tabular}{|C{1.5cm}|C{1.5cm}|C{1.5cm}|}	
			\hline
			            $\alpha_{e}^{(1)}$   &  $\alpha_{e}^{(2.5)}$  &  $\alpha_{e}^{(5)}$     \\ \hline			
			              3.036      &    1.990     &    1.638      \\ \hline	
		\end{tabular}
	\end{threeparttable}
\end{table}

In this subsection, we show the main results of performance evaluation, i.e., the revenue improvement $\alpha_{e}$ as explained in Sec.~\ref{sec.performance-metrics}. We use $\alpha_{e}^{(1)}$, $\alpha_{e}^{(2.5)}$, $\alpha_{e}^{(5)}$ to denote the $\alpha_{e}$ respectively in the case that the spot market is fully, moderately and poorly saturated with bids as explained in Sec.~\ref{sec.job-arrivals-1}. The main experimental results are given in Table~\ref{efficiency_improvement-1-0} and~\ref{efficiency_improvement-2-0}. For example, in the case that the users' bid prices follow a uniform distribution, if the spot market is saturated with many bids (i.e., the fully-saturated case), there is a at least 4.5-fold increase in the CSP's revenue; if the spot market is less saturated (i.e., the poorly-saturated case), the CSP's revenue can still be increased by more than 1.5-fold.

For the uniform distribution case, the average value of all spot prices is 0.7756, 0.6624, and 0.6061 respectively in the fully, moderately, and poorly saturated spot market. The spot price is the minimum price of all accepted bids. With more bids available, the CSP can choose to accept the bids with higher prices; hence, in a more saturated spot market, the average spot price is also higher. The spot prices of the first unit at slot $t=222001+6\cdot h^{\prime}\in [222001, 222300]$ is illustrated in Fig.~\ref{fig.spot_prices_over_time} where $h^{\prime}=0, 1, \cdots, 49$; we can observe that, if the spot market is saturated to a higher degree, the change of spot prices over time is also larger. Taking all the 6 processing units into account, the average number of spot jobs accepted per super-slot is 1444, 1065, and 666 respectively in the fully, moderately, and poorly saturated spot market; correspondingly, the average number of spot jobs newly arriving and accepted is 506.1, 335.1, and 201.2. Let us consider the poorly and fully saturated markets respectively. Due to the offer of spot instances to users, in the former case, the average utilization of instances is improved from 0.1146 to 0.4441, while the CSP's revenue is improved by 164.1\%; in the latter case, the average utilization of instances is improved from 0.1146 to 0.8292, while the CSP's revenue is improved by 461.5\%. As far as the first unit is concerned, the number of spot instances available and the number of accepted bids at slot $t=222001+6\cdot h^{\prime}\in [222001, 222300]$ are illustrated in Fig.~\ref{number_preempt_basic} where $h^{\prime}=0, 1, \cdots, 49$.

\begin{figure}[t]
\begin{center}
  \includegraphics[width=3.15in]{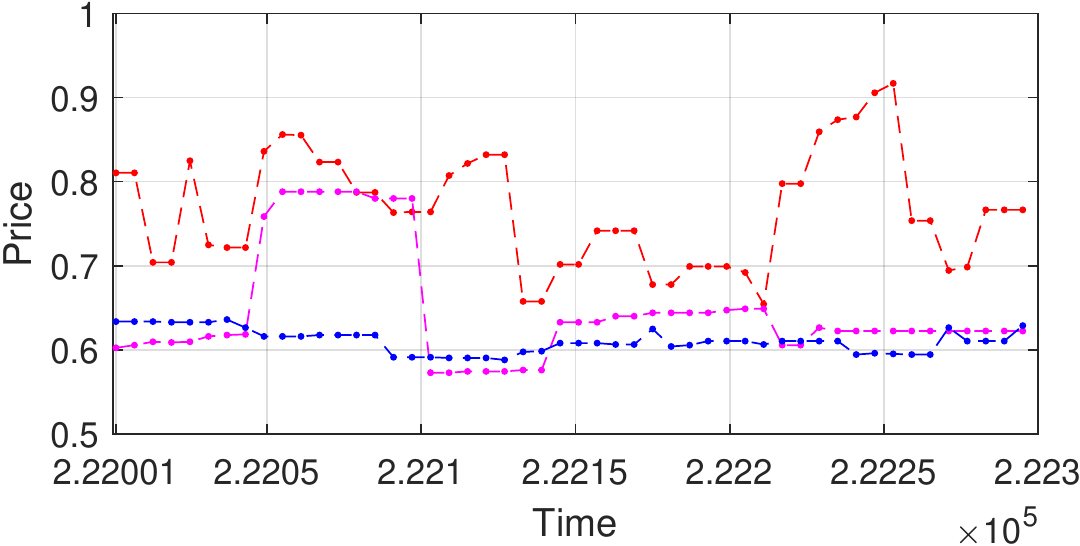}
  \caption{The spot prices over $[222001, 222300]$: the red, magenta and blue points correspond to the fully, moderately and poorly saturated case respectively.}\label{fig.spot_prices_over_time}
\end{center}
\end{figure}

\begin{figure}[t]
\begin{center}
  \includegraphics[width=3.35in]{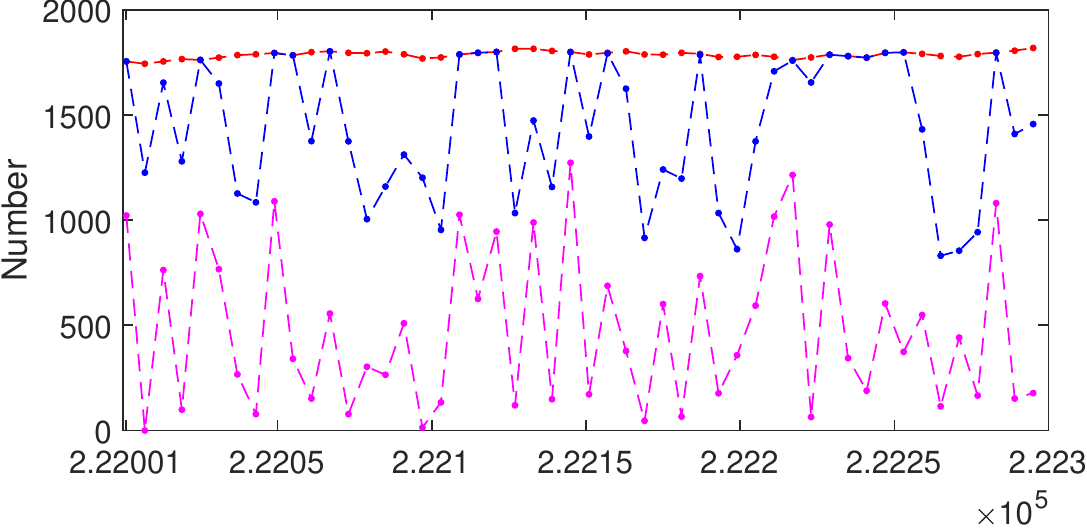}\\
\includegraphics[width=3.35in]{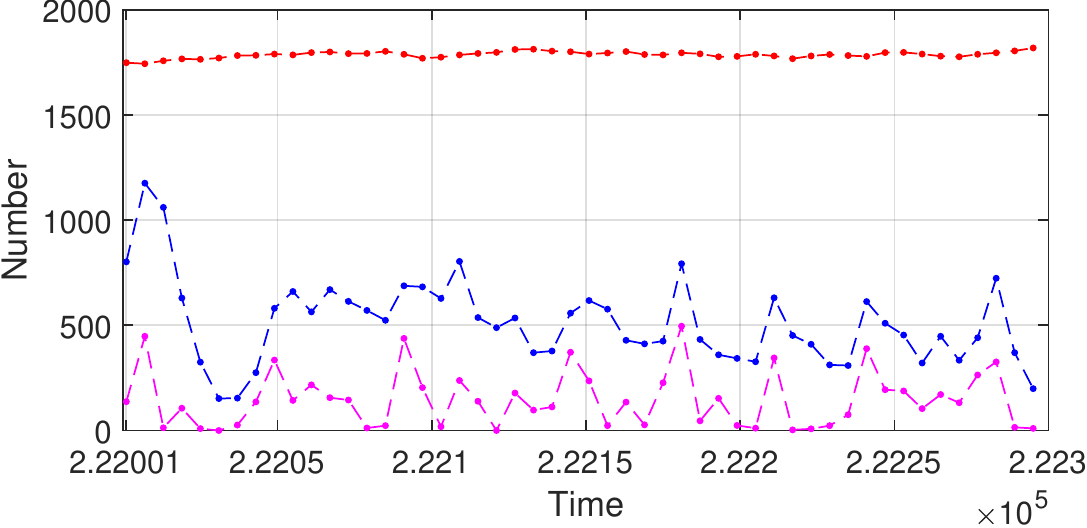}
  \caption{The supply and demand relation in the slot interval $[222001, 222300]$: the red points denote the number of spot instances, the blue points denote the total number of bids accepted, and the magenta points denote the number of bids newly arriving and accepted; the top and bottom subfigures correspond to the fully and poorly saturated cases respectively.}\label{number_preempt_basic}
\end{center}
\end{figure}

Finally, we summarize the average server utilization per super-slot before and after offering spot instances to users. Recall the definition of $\theta$ and $\mu$ in Sec.~\ref{sec.performance-metrics}; we use $\mu_{x}^{(y)}$ denote the value of $\mu$ under a specific environment. In particular, when $x=1$ ({\em resp.} $x=2$), the experiment is done in the case that the bid prices follow the uniform distribution ({\em resp.} the bounded pareto distribution). $\mu_{x}^{(1)}$, $\mu_{x}^{(2.5)}$, and $\mu_{x}^{(5)}$ denote the utilization respectively in the case that the spot market is fully, moderately, and poorly saturated. For example, $\mu_{1}^{(1)}$ denotes the utilization when the bid prices follow the uniform distribution and the spot market is fully saturated. The related results are listed in Table~\ref{utilization_improvement}.

\begin{table*}[t]
	\centering
		\caption{The improvement to resource utilization: in case 1, the CSP only offers on-demand instances; in cases 2 and 3, both on-demand and spot instances are offered, where the bid prices follow a uniform distribution and a bounded pareto distribution respectively.}
		\label{utilization_improvement}
	\begin{threeparttable}[b]
		\begin{tabular}{|C{1.5cm}|C{1.5cm}|C{1.5cm}|C{1.5cm}|C{1.5cm}|C{1.5cm}|C{1.5cm}|}	
			\hline
\cline{1-7} \multicolumn{1}{ |c| }{Case 1} &  \multicolumn{3}{ c| }{Case 2} & \multicolumn{3}{ c| }{Case 3}\\ \hline
			         $\theta$   &   $\mu_{1}^{(5)}$   &  $\mu_{1}^{(2.5)}$  &  $\mu_{1}^{(1)}$  &  $\mu_{2}^{(5)}$   &  $\mu_{2}^{(2.5)}$  &  $\mu_{2}^{(1)}$   \\ \hline			
			    0.1146      &    0.4441      &   0.6415     &    0.8292    &   0.4435      &    0.7411     &    0.8309   \\ \hline	
		\end{tabular}
	\end{threeparttable}
\end{table*}

\subsection{Comparison with a Dynamic Reserve Price Algorithm}

The framework for sharing the time of servers among on-demand and spot services -- in order to optimally pricing spot instances -- has four main components as described in Section~\ref{sec.components}. The $4$-th component is actually the one operating the spot pricing mechanism: consistent with the claim of Amazon EC2 \cite{Ben-Yehuda13a}, it should rely on the relation of demand and supply, i.e., the bids and the number of spot instances available. We aim at comparing our scheme with the reference one proposed by the authors of \cite{Ben-Yehuda13a}; they actually claim that Amazon EC2 may in practice set its spot prices artificially by a dynamic reserve price (DRP) algorithm \cite{Ben-Yehuda13a}. In this subsection, we thus replace the pricing scheme in the 4th component with the DRP algorithm in \cite{Ben-Yehuda13a}, and show the performance of our framework when the scheme in \cite{Ben-Yehuda13a} is applied. The DRP algorithm draws spot prices from a fixed range $[F, C]$ where $F$ and $C$ are the lower and upper bounds of spot prices. In particular, it is initialized with a reserve price of $P_{0} = F$ and a price change of $\Delta_{0} = 0.1\cdot (F-C)$. At each processing unit, the spot price is updated every 6 slots and the $l$-th spot price $P_{l}$ is recursively defined as follows where $l=1,2,\cdots$:
\begin{center}
$P_{l} = P_{l-1} + \Delta_{l}$, and $\Delta_{l} = -0.7 \cdot \Delta_{l-1} +\epsilon(\sigma^{\prime})$
\end{center}
where $\epsilon(\sigma^{\prime})$ is white noise with a standard deviation $\sigma^{\prime}=0.39\cdot(C-F)$; here, $\Delta_{l}$ may be generated multiple times until the resulting $P_{l}$ is within $[F, C]$ and does not equal $P_{l-1}$, i.e., $P_{l}\in [F, C]-\{P_{l-1}\}$. In our experiments of Section~\ref{sec.basic-spot-pricing}, if the users' bid prices follow the uniform distribution, the minimum and maximum spot prices are $(\pi_{min}, \pi_{max})=$ $(0.5024, 0.9496)$, $(0.5384, 0.8744)$ and $(0.5224, 0.7712)$ respectively in the fully, moderately, and poorly saturated case; correspondingly, if the bid prices follow the Pareto distribution, $(\pi_{min}, \pi_{max})$ is $(0.3000, 0.8110)$, $(0.3000, 0.6110)$ and $(0.5048, 0.7272)$ respectively. In the experiments, we set $(F, C)=(\pi_{min}, \pi_{max})$ in each case.
The related results for revenue improvement are listed in Table~\ref{efficiency_improvement-1-1} and~\ref{efficiency_improvement-2-1}, and the server utilization before and after complementing the on-demand market with a spot market is summarized in Table~\ref{utilization_improvement-2}; the meaning of related symbols has been introduced in Section~\ref{sec.basic-spot-pricing}.

\begin{table*}[t]
	\centering
		\caption{With the pricing scheme in \cite{Ben-Yehuda13a}, the revenue improvement when the bid prices follow the uniform distribution.}
		\label{efficiency_improvement-1-1}
	\begin{threeparttable}[b]
		\begin{tabular}{|C{1.5cm}|C{1.5cm}|C{1.5cm}|}	
			\hline
			            $\alpha_{e}^{(1)}$   &  $\alpha_{e}^{(2.5)}$  &  $\alpha_{e}^{(5)}$     \\ \hline			
			              3.526      &    2.182     &    1.373      \\ \hline	
		\end{tabular}
	\end{threeparttable}
\end{table*}

\begin{table*}[t]
	\centering
		\caption{With the pricing scheme in \cite{Ben-Yehuda13a}, the revenue improvement when the bid prices follow the bounded pareto distribution.}
		\label{efficiency_improvement-2-1}
	\begin{threeparttable}[b]
		\begin{tabular}{|C{1.5cm}|C{1.5cm}|C{1.5cm}|}	
			\hline
			            $\alpha_{e}^{(1)}$   &  $\alpha_{e}^{(2.5)}$  &  $\alpha_{e}^{(5)}$     \\ \hline			
			              1.960      &    1.375     &    0.4620      \\ \hline	
		\end{tabular}
	\end{threeparttable}
\end{table*}

\begin{table*}[t]
	\centering
		\caption{The improvement to resource utilization when the spot pricing scheme in \cite{Ben-Yehuda13a} is applied: in case 1, the CSP only offers on-demand instances; in cases 2 and 3, both on-demand and spot instances are offered, where the bid prices follows a uniform distribution and a bounded pareto distribution respectively.}
		\label{utilization_improvement-2}
	\begin{threeparttable}[b]
		\begin{tabular}{|C{1.5cm}|C{1.5cm}|C{1.5cm}|C{1.5cm}|C{1.5cm}|C{1.5cm}|C{1.5cm}|}	
			\hline
\cline{1-7} \multicolumn{1}{ |c| }{Case 1} &  \multicolumn{3}{ c| }{Case 2} & \multicolumn{3}{ c| }{Case 3}\\ \hline
			         $\theta$   &   $\mu_{1}^{(5)}$   &  $\mu_{1}^{(2.5)}$  &  $\mu_{1}^{(1)}$  &  $\mu_{2}^{(5)}$   &  $\mu_{2}^{(2.5)}$  &  $\mu_{2}^{(1)}$   \\ \hline			
			    0.1146      &    0.3769      &   0.5004     &    0.7255    &   0.2084      &    0.5164     &    0.5968   \\ \hline	
		\end{tabular}
	\end{threeparttable}
\end{table*}


Overall, we can see from Table~\ref{efficiency_improvement-1-0} and \ref{efficiency_improvement-1-1} (or Table~\ref{efficiency_improvement-2-0} and \ref{efficiency_improvement-2-1}) that, the proposed algorithm of this paper can achieve higher revenue improvement than the DRP algorithm in \cite{Ben-Yehuda13a}; here, the revenue from the on-demand market depends on the utilization of on-demand market that equals 0.1146, as shown in Table \ref{utilization_improvement} and \ref{utilization_improvement-2}. As claimed in \cite{Ben-Yehuda13a}, when the spot market is saturated with less bids, the use of the DRP scheme can artificially create a false impression of the changes of demand and supply and mask times of low demand and price inactivity, thus possibly driving up the CSP's stock. This is confirmed in our experimental results: as illustrated in Fig.~\ref{fig.spot_prices_over_time}, when the spot market is saturated with many bids (e.g., the fully-saturated case), the spot prices vary more dramatically over time; however, in the poorly-saturated case, the spot prices vary slightly and even keep constant in a relatively long period. So, in the poorly-saturated case, it may be necessary to sets spot price artificially.

\subsection{Comparison with Another Framework for On-demand and Spot Services}

We first introduce the framework of Dierks {\em et al.} in \cite{Dierks16a,Dierks19a} where on-demand and spot markets are modeled as two separate queues $Q_{1}$ and $Q_{2}$. For $Q_{1}$, the number of servers $m_{o}$ is the minimum severs needed to guarantee that the expected waiting time of jobs is very small; the price of utilizing a sever is $p$ per unit of time. The second $Q_{2}$ is a priority queue that has $m_{S}$ servers: every job continuously bids to utilize the servers of spot market until it gets enough execution time; jobs with higher bid prices have higher priorities to utilize servers. There are $n$ job classes whose expected arrival rates are $\lambda=(\lambda_{1}, \cdots, \lambda_{n})$. The job size is drawn from a probability distribution with expectation $\frac{1}{\mu}$. For every job of class $i$, a waiting cost $c$ is drawn from a distribution $F_{i}(c)$ on $[0,\, v_{i}^{\prime}]$.
For a job of spot market, its execution on a server may be preempted when there are unfinished jobs of higher bid prices; each preemption brings a cost $c\cdot\tau$ to it.
Let $\sigma=(\zeta, \eta)$ and $\sigma$ denotes the strategy of a job: when $\zeta=\mathcal{O},\, \mathcal{S}$, or $\mathcal{B}$, it means that this job will choose the on-demand, spot, or neither market; $\eta$ is the bid price if $\zeta=\mathcal{S}$.
In a Bayesian Nash incentive compatible spot market framework, all participants ideally have the knowledge such as $\lambda$, $\mu$ and $F_{1}(c), \cdots, F_{n}(c)$. Further, each job of class $i$ can derive the optimal strategy $\sigma$ to maximize its expected payoff; here, if a job chooses spot market, its bid price will be its waiting cost $c$ and its payment can also be derived. With the payments of jobs, we can get the revenue of both markets.

The framework of this paper elaborates on the current service model in Amazon EC2. Compared with \cite{Dierks19a}, the Amazon EC2 model has its advantages in terms of revenue generation and quality of service. It allows selling the idle state of on-demand market on the spot market to get additional revenue \cite{Devanur17,Ben-Yehuda13a}.
A complication of the model in \cite{Dierks19a} lies in that the completion time of a spot job depends on the arrival rate of the jobs of higher bid prices that is usually uncertain in reality.
This incurs the volatility of jobs' completion times and even delay-tolerant users can become reluctant to accept such service. As discussed in Sec.~\ref{sec.preliminary}, using the model of this paper, delay-tolerant jobs can first bid to utilize spot instances in some period and then turn to stable on-demand instances, leading to that they are finished at expectable times. The numerical comparison of revenue should be taken under the same input condition. In \cite{Dierks19a}, once a spot job submits the bid, it cannot cancel its bid until it gets a specific amount of execution time; its payment relies on its completion time and waiting cost. In our framework, a spot user can stop bidding at any latter slot after it begins bidding; the payment at a slot is the minimum price of all accepted bids, with no connection to the job's waiting cost and completion time. The CSP's revenue is the price times the processed workload. To enable comparison, we let the average spot prices in both frameworks be the same. Specifically, we let the on-demand price be $1$; the generated revenue is 0.5 when a spot instance dedicates an hour to a job. As before, we use the PTC policy to assign jobs to servers.


\begin{table*}[t]
	\centering
		\caption{Revenue improvement compared with the on-demand and spot model in \cite{Dierks19a}.}
		\label{efficiency_improvement-priority-queue}
	\begin{threeparttable}[b]
		\begin{tabular}{|C{1.5cm}|C{1.5cm}|C{1.5cm}|}	
			\hline
			            $\alpha_{e}^{(30)}$   &  $\alpha_{e}^{(60)}$  &  $\alpha_{e}^{(120)}$     \\ \hline			
			              2.816      &    2.181     &    1.736      \\ \hline	
		\end{tabular}
	\end{threeparttable}
\end{table*}

In \cite{Dierks19a}, we set $n=1$ and the waiting costs of jobs are uniformly distributed on $\{0.3, 0.7\}$. The job arrivals follow a poisson distribution, with expectations $\lambda_{o}^{\prime}$ in on-demand market and $\lambda_{s}^{\prime}$ in spot market. Fixing $\lambda_{o}^{\prime}=30$, we consider three cases with $\lambda_{s}^{\prime}=$ 30, 60, and 120 respectively. The sizes of jobs are set as described in Sec.~\ref{sec.job-arrivals}. Jobs of spot market have to be finished within bounded periods. We let jobs with $c=0.7$ have a waiting time $\leq 48$ slots (i.e., 4 hours) and let jobs with $c=0.3$ have a waiting time $\leq 132$ slots (i.e., 11 hours); the value of $m_{S}$ is the minimum amount of servers needed to guarantee the QoS, i.e., the spot market is fully saturated with as many jobs as possible; the QoS is guaranteed at the 99th percentile as described in Sec.~\ref{sec.components}. The total number of servers is $m=m_{o}+m_{s}$. We compute the average numbers of servers utilized per slot in on-demand and spot markets, denoted by $\overline{m}_{o}$ and $\overline{m}_{s}$; the average unit revenue of both markets is $\left(\overline{m}_{o}+0.5\cdot\overline{m}_{s}\right)/12$. For comparison, we also compute the average unit revenue achieved under our framework in the case that there are $m$ servers; here, the spot market is also fully saturated with all idle servers of on-demand market being utilized by spot users. We denote by $\hat{\alpha}_{e}$ the ratio of the average unit revenue of our framework to its counterpart with the framework of \cite{Dierks19a}; let $\alpha_{e}=\hat{\alpha}_{e}-1$ denote the revenue improvement when comparing our framework with the one in \cite{Dierks19a}. We use $\alpha_{e}^{(30)}$, $\alpha_{e}^{(60)}$, $\alpha_{e}^{(120)}$ to denote the $\alpha_{e}$ respectively in the case that $\lambda_{s}^{\prime}$ is 30, 60, and 120, and the experimental results are given in Table~\ref{efficiency_improvement-priority-queue}.



\section{Conclusion}
\label{sec.conclusion}

In a system where on-demand and spot users coexist, on-demand users arrive randomly and have high priority to access servers, while spot users bid prices to utilize the time periods unoccupied by on-demand users. A key feature to make such services accessible is that, on-demand users can get served within a short time upon arrivals, while a spot user can stably utilize a server (without the interference of on-demand users) for a sufficient amount of time once bidding successfully. In this paper, we propose a framework that has such a feature for sharing the time of servers among on-demand and spot users. Under such a framework, specific schemes are proposed to accept and assign the requests of on-demand and spot users to servers and to optimally price spot instances. 
The framework itself is designed under assumptions which are met in real environments. With a few further mild assumptions, an analysis of the proposed framework is also taken to understand which parameters drive its performance. Extensive simulations show a significant improvement to the revenue as well as the server utilization once an on-demand market is complemented with a spot market. In the case where less bids are available, the revenue improvement is showed to be indeed smaller, but still significant compared with a bare on-demand market.

\bibliographystyle{ACM-Reference-Format}
\bibliography{sample-base}

\end{document}